\setlist[enumerate]{leftmargin=.5in}
\setlist[itemize]{leftmargin=.5in}
\crefname{hypothesis}{Hypothesis}{Hypotheses}
\def\G{{\mathcal G}}
\def\P{{\mathcal P}}
\def\cE{\mathbb{E}}
\def\ideal{{\mathsf{ideal}}}
\def\low{{\mathsf{low}}}
\def\sX{{\mathsf X}}
\def\cL{\mathcal{L}}
\def\R{\mathbb{R}}
\def\rZ{\mathbb{Z}}
\def\N{{\mathcal N}}
\def\Id{\mathbb{Id}}
\def\diag{\mathop{\rm diag}}
\def\Pr{\mathop{\rm Pr}}
\def\argmin{\mathop{\rm arg\, min}}
\title{Best Mean Ergodic Averages via Optimal Graph Filters in Reversible Markov Chains\thanks{Submitted to the editors DATE.}}
\author{Naci Saldi\thanks{Department of Mathematics, Bilkent University, \c{C}ankaya, Ankara, Turkey (\email{naci.saldi@bilkent.edu.tr}).}}
\begin{document}

\maketitle

\begin{abstract}
In this paper, we address the problem of finding the best ergodic or Birkhoff averages in the mean ergodic theorem to ensure rapid convergence to a desired value, using graph filters. Our approach begins by representing a function on the state space as a graph signal, where the (directed) graph is formed by the transition probabilities of a reversible Markov chain. We introduce a concept of graph variation, enabling the definition of the graph Fourier transform for graph signals on this directed graph. Viewing the iteration in the mean ergodic theorem as a graph filter, we recognize its non-optimality and propose three optimization problems aimed at determining optimal graph filters. These optimization problems yield the Bernstein, Chebyshev, and Legendre filters. Numerical testing reveals that while the Bernstein filter performs slightly better than the traditional ergodic average, the Chebyshev and Legendre filters significantly outperform the ergodic average, demonstrating rapid convergence to the desired value. 
\end{abstract}

\begin{keywords}
Mean ergodic theorem, graph filters, Bernstein-Chebyshev-Legendre polynomials. 
\end{keywords}

\begin{MSCcodes}
94A12,41A10, 05C50, 65D15  	  	  	
\end{MSCcodes}

\section{Introduction}

The (mean) ergodic theorem is a cornerstone in the study of dynamical systems and stochastic processes, providing a foundation for understanding long-term average behavior. Traditional (mean) ergodic or Birkhoff averages are essential tools in this significant result, but their convergence rates can be slow. This paper seeks to enhance these averages using graph signal processing techniques, particularly through the application of graph filters.

Our approach begins by conceptualizing a function on the state space as a graph signal, where the (directed) graph, on which a graph signal can be defined, is constructed based on the transition probabilities of a reversible Markov chain. By introducing a notion of graph variation, we can define the graph Fourier transform for graph signals on this directed graph. This new perspective allows us to interpret the iteration in the mean ergodic theorem as a graph filter. However, this graph filter is not optimal in terms of convergence speed. To address this issue, we formulate three distinct optimization problems, each with a different objective, to derive the optimal graph filters. The solutions to these problems result in the Bernstein, Chebyshev, and Legendre polynomial filters.

We then validate our theoretical findings through numerical experiments. The results indicate that the Bernstein filter offers a slight improvement over the standard ergodic average. More notably, the Chebyshev and Legendre filters exhibit significantly better performance, achieving rapid convergence to the desired value. These findings highlight the effectiveness of our proposed graph signal processing approach in optimizing mean ergodic averages.

Our paper lies at the intersection of two major lines of research: acceleration techniques in numerical linear algebra based on approximation theory, and the study of optimal Birkhoff averages in the ergodic theory of deterministic dynamical systems. The first line seeks to speed up iterative algorithms for numerical linear algebra problems by designing polynomial filters that selectively attenuate undesirable spectral components \cite{Saa11}. The second investigates how to construct optimal ergodic averages that improve convergence rates in the ergodic theorem for deterministic systems \cite{DaYo18}.

In numerical linear algebra tasks such as solving linear systems or computing extreme eigenvalues are frequently accelerated using polynomial filtering techniques, either adaptive or non-adaptive. Classical examples include the Lanczos method \cite{Lan52,SaVo00}, which can be viewed as an adaptive polynomial filtering procedure for computing extreme eigenvalues of symmetric matrices, and its more general counterpart, the Arnoldi iteration \cite{Arn51}. In these methods, convergence is enhanced by adaptively reshaping the spectrum of the underlying matrix. Chebyshev acceleration, also known as the Chebyshev semi-iterative method \cite{GoVa61,Man77}, provides a complementary, non-adaptive approach, constructing polynomial filters that suppress unwanted spectral components while amplifying the desired ones \cite{Saa11}. Indeed, the optimization problem that gives rise to the Chebyshev filter in our setting coincides with the one underlying Chebyshev acceleration (or the Chebyshev semi-iterative method). This correspondence is entirely natural, as the construction of near-optimal polynomial filters through approximation-theoretic techniques is a classical problem that has been extensively developed in numerical analysis and signal processing. Accordingly, we do not claim novelty in this approximation-theoretic aspect, as in this part we simply apply well-established results from approximation theory within our problem framework. Our contribution instead lies in introducing a new perspective, namely the interpretation of the ergodic theorem through the framework of graph signal processing.

Specifically, we formulate the acceleration of ergodic averages as a graph signal processing problem. Functions acted upon by a Markov kernel are interpreted as graph signals on the directed graph induced by the transition probabilities. Within this framework, we introduce a notion of graph variation, relate it to the Laplacian spectrum of the graph, and interpret the Laplacian eigenvalues as frequencies. Under this viewpoint, ergodic convergence becomes a low-pass filtering problem: the zero Laplacian eigenvalue represents the lowest frequency, and the associated Fourier coefficient corresponds to the expectation with respect to the stationary distribution. The goal is therefore to eliminate higher-frequency components and retain only the zero-frequency mode.

This objective differs fundamentally from that of eigenvalue algorithms such as Lanczos and Arnoldi. Those methods are designed to compute extreme eigenvalues and their corresponding eigenvectors. In our setting, however, these quantities are already known: the smallest Laplacian eigenvalue is zero, and its eigenvector is the constant vector $\mathbf{1}$. Our aim is therefore not to compute this eigenpair, but rather to suppress all other spectral components of a given signal in order to extract its stationary mean.

There is also a structural distinction. Arnoldi-type methods are adaptive and require repeated orthogonalization, leading to substantial computational and memory costs. In contrast, our approach employs non-adaptive polynomial filters, similar in spirit to Chebyshev semi-iterative method. The filter is fixed in advance, depends only on spectral bounds, and can be implemented through simple iterative recursions that preserve the lightweight structure of the classical Birkhoff ergodic iteration. Indeed, as our goal is to generalize that iteration while maintaining its simplicity, adaptive procedures are not suitable for our purposes.

In summary, although polynomial spectral filtering is a classical technique and closely related to acceleration methods in numerical linear algebra, our contribution lies in the novel graph signal processing interpretation of ergodic convergence. By formulating ergodic acceleration as a low-pass filter design problem tailored to reversible Markov chains, we provide a fresh perspective that opens up new research directions. This viewpoint suggests that additional filtering techniques from the signal processing literature, such as IIR filters, could potentially be adapted to this setting to further enhance convergence. More broadly, framing the problem within signal processing allows us to draw upon a rich set of tools and methodologies from that field to address related challenges. We believe this perspective will inspire further work at the intersection of dynamical systems, Markov chains, and graph signal processing.

In the context of ergodic theory for deterministic dynamical systems, the Birkhoff ergodic theorem states that, for ergodic systems, the time averages of a function along a trajectory of length \( T \) converge to the corresponding space average as \( T \to \infty \). In practice, however, this convergence is often slow. To address this limitation, researchers have explored the use of non-uniform weights in place of the standard uniform averages that weight each point along the trajectory equally; an approach that is particularly relevant to our problem formulation.

In \cite{DaYo18}, the authors assign significantly lower weights to the initial and final points of the trajectory relative to those near the midpoint using a bump function. They demonstrate that, for quasi-periodic dynamical systems, these weighted averages achieve substantially faster convergence, provided the observable \( f \) is sufficiently differentiable. Related studies \cite{LiWe07, BeJol92, BeLo85} also investigate the acceleration of weighted ergodic sums, though they do not provide explicit convergence rates as in \cite{DaYo18}. In \cite{DuSc02}, a specific choice of weights yields a convergence rate that is inferior to that obtained in \cite{DaYo18}. More recently, \cite{ToYo22} establishes polynomial and exponential convergence rates for weighted Birkhoff averages of irrational rotations on tori, thereby improving upon the results in \cite{DaYo18}. Importantly, none of these works incorporate an optimization component into the problem formulation. They establish that if the weight-generating function satisfies certain properties, then the weighted averages converge to the space average at a certain speed; however, they do not claim optimality of the convergence rate nor that the chosen weights are optimal in any sense.

The most pertinent work to our study in this direction is \cite{RuBi24}. There, the authors expand the observable \( f \) in a basis and interpret weighted Birkhoff averages as a filter applied to the expansion coefficients. They observe that a filter tuned to specific frequencies can significantly outperform an arbitrary bump function of the type considered in \cite{DaYo18}. This perspective aligns closely with the approach we adopt in the present paper. Nevertheless, several important distinctions remain. First, \cite{RuBi24} considers deterministic dynamical systems on abstract topological spaces, whereas our focus is on stochastic dynamics governed by a Markov chain on a finite state space. Second, in \cite{RuBi24}, the expansion involves infinitely many frequency components, in contrast to the finite spectral decomposition that arises in our setting; consequently, the function approximated via filters differs, as infinitely many frequency components may contribute. Third, the basis functions corresponding to the frequencies in \cite{RuBi24} are not related to the eigenfunctions of the dynamical map, and thus no operator-theoretic or spectral analysis of the dynamics is employed. Finally, their work does not involve an underlying graph structure, whereas our formulation explicitly leverages the graph induced by the Markov transition kernel.

We note that even if one views optimal Birkhoff averages in deterministic dynamical systems via an operator theory perspective, a complication still arises as the operators are generally infinite-dimensional, and so their spectra are not well-behaved (isolated) as in the finite-dimensional case. Hence, graph signal processing techniques cannot be applied. Indeed, an important research problem in this direction is as follows: consider a Markov chain with an abstract state space. Then the transition probability can be viewed as an infinite-dimensional operator, where the spectrum may contain infinitely many elements. In this case, the eigenvalue $1$ is probably not isolated, and so we cannot directly apply graph filters as we cannot easily attenuate eigenvalues that are closer to $1$. In this case, we can first approximate the operator with a finite-dimensional one and then apply our graph filters.

\subsection{Our Contributions}

Standard ergodic averages are essential in mean ergodic theorem for Markov chains but they often suffer from slow convergence rates. This paper proposes a method to improve these averages by employing graph signal processing techniques, particularly through the use of graph filters.

\begin{itemize}

\item[(a)] In Section~\ref{graph_signal}, we initiate our method by interpreting a function on the state space as a graph signal. This graph signal is defined on a directed graph constructed from the transition probabilities of a reversible Markov chain. Introducing the concept of graph variation enables us to establish the graph Fourier transform for these signals on the directed graph. This framework allows us to view the iteration in the mean ergodic theorem as a graph filter. However, this graph filter does not achieve optimal convergence speed.

\item[(b)] To tackle the problem of suboptimal convergence speed, we introduce three distinct optimization problems in Sections~\ref{bernstein}, \ref{chebyshev}, and \ref{legendre}. Each optimization problem is designed with a unique objective in mind, aimed at deriving the most efficient graph filters for our purposes. The outcomes of these optimizations are the Bernstein, Chebyshev, and Legendre polynomial filters. By solving these optimization problems, we develop a set of graph filters that significantly enhance (at least for Chebyshev and Legendre polynomial filters) the convergence rates of mean ergodic averages.

\item[(c)] In Section~\ref{numerical}, we support our theoretical developments with numerical experiments. The results demonstrate that while the Bernstein filter provides a modest enhancement over the traditional ergodic average, the Chebyshev and Legendre filters show substantial performance improvements. 

\end{itemize}

\section{Reversible Markov Chains}
\label{rev-MC}

Let $\sX$ be a finite set and let $P:\sX\rightarrow\P(\sX)$ be a transition probability of a Markov chain on $\sX$. By an abuse of notation, we also denote the transition matrix via $P$, and so, $P \in \R^{\sX\times\sX}$. Therefore, $P(x,y) = \Pr\{X_{t+1} = y | X_t = x\}$ for all $x,y \in \sX$. We assume that $P$ is irreducible. In this case, it is known that there exists a unique stationary distribution $\pi$ of $P$ (see \cite[Corollary 1.17]{LePe17}) and $\pi(x) > 0$ for all $x \in \sX$ (see \cite[Proposition 1.19]{LePe17}). The transition probability $P$ is said to be \emph{reversible} if there exists a probability distribution $\pi$ on $\sX$ such that 
$
\pi(x) P(x,y) = \pi(y) P(y,x) \,\,\,\, \forall x,y \in \sX.
$
This equation is called \emph{detailed balance equation}. By \cite[Proposition 1.20]{LePe17}, any probability distribution that satisfies the detailed balance equation is a stationary distribution of $P$. Moreover, the corresponding Markov chain $\{X_t\}$ with initial distribution $\pi$ satisfies the following: for any $n$, we have
\begin{align*}
&\Pr\{X_0=x_0,X_1=x_1,\ldots,X_n=x_n\} 
= \pi(x_0) P(x_0,x_1)\ldots P(x_{n-1},x_n) \\
&= \pi(x_n) P(x_n,x_{n-1}) \ldots P(x_1,x_0) = \Pr\{X_0=x_n,X_1=x_{n-1},\ldots,X_n=x_0\}. 
\end{align*}
Hence, the time-reversed version of the Markov chain \(\{X_t\}\) is indistinguishable from the original chain, hence the term \emph{reversible} Markov chain. We define the following inner product on $\R^{\sX}$:
$$
\langle f,g \rangle_{\pi} \triangleq \sum_{x \in \sX} f(x) g(x) \pi(x). 
$$
The following result follows from \cite[Lemmas 12.1 and 12.2]{LePe17}. 

\begin{lemma}
Let $P$ be a reversible Markov chain with respect to $\pi$. Then, we have 
\begin{itemize}
\item[(1)] $P$ has real eigenvalues. We denote this set via $\sigma(P)$. 
\item[(2)] $\sigma(P) \subset [-1,1]$.
\item[(3)] There exists orthonormal basis of real-valued eigenfunctions $\{f_x\}_{x\in\sX}$ corresponding to real eigenvalues $\{\gamma_x\}_{x\in\sX}$ for inner product space $(\R^{\sX},\langle \cdot,\cdot \rangle_{\pi})$. Moreover, for $\gamma_{x^*} = 1$, we have $f_{x^*} = \mathbf{1}$.
\end{itemize}
\end{lemma}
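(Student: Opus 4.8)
The plan is to reduce all three assertions to the spectral theorem for real symmetric matrices via the standard symmetrization trick. Set $D_\pi := \diag\big(\pi(x):x\in\sX\big)$, which is invertible since $\pi(x)>0$ for all $x$. The detailed balance equation $\pi(x)P(x,y)=\pi(y)P(y,x)$ says precisely that $D_\pi P$ is symmetric, i.e. $D_\pi P = P^{\top}D_\pi$. I would then introduce
$$
A := D_\pi^{1/2}\,P\,D_\pi^{-1/2}
$$
and verify that $A$ is real symmetric: $A^{\top} = D_\pi^{-1/2}P^{\top}D_\pi^{1/2} = D_\pi^{-1/2}\big(D_\pi P D_\pi^{-1}\big)D_\pi^{1/2} = D_\pi^{1/2}P D_\pi^{-1/2} = A$, where the middle equality uses $P^{\top}D_\pi = D_\pi P$.

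Since $A$ and $P$ are similar, they share the same eigenvalues; the spectral theorem then shows these are real, which is part (1), and simultaneously gives an orthonormal basis $\{\phi_x\}_{x\in\sX}$ of $\R^{\sX}$ (for the standard Euclidean inner product) consisting of real eigenvectors of $A$, say $A\phi_x = \gamma_x\phi_x$. For part (2), note that $P$ is stochastic, so $\|P\|_\infty = 1$ and hence its spectral radius is at most $1$ (equivalently, invoke Perron–Frobenius with irreducibility, or Gershgorin); combined with reality of $\sigma(P)$ this yields $\sigma(P)\subset[-1,1]$.

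For part (3), set $f_x := D_\pi^{-1/2}\phi_x$. Then $P f_x = D_\pi^{-1/2}A D_\pi^{1/2}D_\pi^{-1/2}\phi_x = D_\pi^{-1/2}A\phi_x = \gamma_x f_x$, so each $f_x$ is a real-valued eigenfunction of $P$ with eigenvalue $\gamma_x$, and a one-line computation gives $\langle f_x,f_y\rangle_\pi = \sum_z \pi(z)^{-1}\pi(z)\,\phi_x(z)\phi_y(z) = \sum_z\phi_x(z)\phi_y(z) = \delta_{xy}$, so $\{f_x\}_{x\in\sX}$ is orthonormal in $(\R^{\sX},\langle\cdot,\cdot\rangle_\pi)$. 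Finally, since $P\mathbf{1}=\mathbf{1}$ and $P$ is irreducible, Perron–Frobenius implies the eigenvalue $1$ is simple with eigenspace $\sspan\{\mathbf{1}\}$; the corresponding basis vector $f_{x^*}$ is therefore a scalar multiple of $\mathbf{1}$, and since $\langle\mathbf{1},\mathbf{1}\rangle_\pi = \sum_x\pi(x)=1$ we may fix the sign so that $f_{x^*}=\mathbf{1}$.

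I do not expect a genuine obstacle: the entire content is the symmetrization identity $A = D_\pi^{1/2}P D_\pi^{-1/2}$, and the only points that need mild care are (i) the handling of eigenvalue multiplicities, where the spectral theorem still supplies an orthonormal eigenbasis within each eigenspace, and (ii) that irreducibility is used solely in the ``moreover'' part to identify the $\gamma=1$ eigenspace as exactly $\R\mathbf{1}$.
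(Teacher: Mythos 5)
Your proof is correct and follows essentially the same route as the paper's source for this lemma: the paper simply cites \cite[Lemmas 12.1 and 12.2]{LePe17}, whose argument is exactly your symmetrization $A = D_\pi^{1/2} P D_\pi^{-1/2}$ combined with the spectral theorem, stochasticity for the bound $\sigma(P)\subset[-1,1]$, and irreducibility to identify the eigenspace of $1$ with $\R\mathbf{1}$.
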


The \emph{spectral gap} of a reversible Markov chain $P$ is defined as $\gamma \triangleq 1-\lambda_2$, where $\lambda_2$ is the second largest eigenvalue of $P$. We close this section by stating the Birkhoff ergodic theorem for irreducible Markov chains. 

\begin{theorem}
Let $P$ be an irreducible Markov chain with a unique stationary distribution $\pi$. Then, for any real-valued function $f$ on $\sX$, we have 
$$
\lim_{t\rightarrow\infty} \cE \left[\frac{1}{t} \sum_{k=0}^{t-1} f(X_t) \bigg| X_0 = x \right] = \sum_{z \in \sX} f(z) \pi(z) \,\, \forall x \in \sX,
$$ 
where $\{X_t\}$ is Markov chain with transition probability $P$. 
\end{theorem}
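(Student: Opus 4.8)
The plan is to convert the probabilistic statement into one about Cesàro averages of the transition matrix and then analyze those by elementary linear algebra. By the Markov property and the tower rule, $\cE[f(X_k)\mid X_0=x]=(P^kf)(x)$, where $P^k$ acts on $\R^{\sX}$ by $(P^kg)(x)=\sum_{y}P^k(x,y)g(y)$ (with the summand in the theorem understood as $f(X_k)$). Hence, by linearity, the quantity on the left-hand side equals $(A_tf)(x)$ with $A_t:=\frac1t\sum_{k=0}^{t-1}P^k$. Let $\Pi$ denote the rank-one row-stochastic matrix all of whose rows equal $\pi$; then $(\Pi f)(x)=\sum_{z}f(z)\pi(z)$ for every $x$, so it suffices to prove the entrywise limit $A_t\to\Pi$, equivalently $A_tf\to\Pi f$ for every $f\in\R^{\sX}$.

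I would next record the structural facts about $P$. Being row-stochastic and irreducible, $P$ has, by Perron--Frobenius, the eigenvalue $1$ with algebraic multiplicity one, right eigenvector $\mathbf{1}$, left eigenvector $\pi$, and spectral projection $\Pi=\mathbf{1}\pi^{\mathsf T}$; all eigenvalues lie in the closed unit disk. Since each power $P^k$ is again row-stochastic, $\sup_k\|P^k\|_\infty<\infty$, and boundedness of the powers forces every eigenvalue of modulus one to be semisimple (a nontrivial Jordan block at such $\lambda$ would produce an entry of size $\sim k$ in $P^k$). Split $\R^{\sX}=\sspan\{\mathbf{1}\}\oplus V$ with $V:=\range(I-\Pi)=\{g:\sum_x\pi(x)g(x)=0\}$, a $P$-invariant subspace on which $P$ has spectrum $\sigma(P)\setminus\{1\}$. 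On $\sspan\{\mathbf{1}\}$ every $A_t$ is the identity and $\Pi f$ is exactly the $\sspan\{\mathbf{1}\}$-component of $f$, so $A_t(\Pi f)=\Pi f$. On $V$, put $P|_V$ in Jordan form and treat each eigenvalue $\lambda\ne1$: if $|\lambda|<1$ the corresponding block of $P^k$ decays geometrically, hence so does its Cesàro average; if $|\lambda|=1$ the block is $1\times1$ by semisimplicity and contributes $\frac1t\sum_{k=0}^{t-1}\lambda^k=\frac1t\cdot\frac{\lambda^t-1}{\lambda-1}\to0$. Thus $A_t|_V\to0$. (In the reversible case one may skip Perron--Frobenius and instead expand $f$ in the orthonormal eigenbasis $\{f_x\}$ of the lemma above, with $f_{x^*}=\mathbf{1}$ and $\langle\mathbf{1},\mathbf{1}\rangle_\pi=1$, applying the scalar Cesàro computation to each $\gamma_x\in[-1,1]$.)

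Combining the two components, $A_tf=\Pi f+A_t\bigl((I-\Pi)f\bigr)\to\Pi f$, and evaluating at $x$ gives $(A_tf)(x)\to\sum_z f(z)\pi(z)$, as claimed. The step I expect to be the crux is the periodic case, where $P$ has eigenvalues of modulus one other than $1$: then $P^k$ itself does not converge, and the argument must use both semisimplicity of those eigenvalues (from $\sup_k\|P^k\|_\infty<\infty$) and the fact that Cesàro averaging annihilates the oscillatory terms $\lambda^k$. Everything else --- the identity $\cE[f(X_k)\mid X_0=x]=(P^kf)(x)$, simplicity of the eigenvalue $1$, and geometric decay of the sub-unit blocks --- is routine; note also that no interchange-of-limits issue arises, since $(A_tf)(x)$ is already a deterministic number. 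Alternatively, one could argue probabilistically: by the ergodic theorem for positive-recurrent chains, $\frac1t\sum_{k=0}^{t-1}\mathbf{1}[X_k=y]\to\pi(y)$ almost surely, and since $\bigl|\frac1t\sum_{k=0}^{t-1}f(X_k)\bigr|\le\max_z|f(z)|$ the bounded convergence theorem upgrades $\frac1t\sum_{k=0}^{t-1}f(X_k)=\sum_y f(y)\,\frac1t\sum_{k=0}^{t-1}\mathbf{1}[X_k=y]$ to convergence in expectation.
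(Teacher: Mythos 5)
Your proof is correct, but note that the paper does not actually prove this theorem: it is stated as a standard result (the mean ergodic theorem for finite irreducible chains, in the spirit of the Levin--Peres references used throughout Section~\ref{rev-MC}), and the only step the paper records is the identity $\cE\bigl[\tfrac1t\sum_{k=0}^{t-1}f(X_k)\mid X_0=x\bigr]=\tfrac1t(\Id+P+\cdots+P^{t-1})f(x)$, which is exactly your opening reduction to the Ces\`aro averages $A_t$. From there your argument is the natural one at the stated level of generality (irreducible, possibly periodic, not necessarily reversible): Perron--Frobenius gives simplicity of the eigenvalue $1$ with spectral projection $\Pi=\mathbf{1}\pi^{\mathsf T}$, boundedness of $\|P^k\|_\infty$ forces semisimplicity of the peripheral eigenvalues, and Ces\`aro averaging kills both the geometrically decaying blocks and the oscillatory terms $\lambda^k$ with $|\lambda|=1$, $\lambda\neq1$; you correctly identify the periodic case as the only nontrivial point. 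In the reversible setting that the paper actually works in, the shorter route you mention parenthetically --- expanding $f$ in the orthonormal eigenbasis $\{f_x\}$ of the preceding lemma and applying the scalar Ces\`aro computation to each $\gamma_x\in[-1,1]$ --- is all that is needed, and it is the spectral picture the rest of the paper builds on; your general argument buys the full irreducible case at the cost of invoking Perron--Frobenius and the Jordan form. The probabilistic alternative via the pathwise law of large numbers plus bounded convergence is also valid, though it leans on a result that is itself at least as deep as the one being proved.
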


\noindent Note that we can write 
$$
\cE \left[\frac{1}{t} \sum_{k=0}^{t-1} f(X_t) \bigg| X_0 = x \right] = \frac{1}{t} (\Id+P+\ldots+P^{t-1})f(x). 
$$
Hence, ergodic theorem implies that  $\frac{1}{t} (\Id+P+\ldots+P^{t-1})f$ converges to the constant vector $\left(\sum_{z \in \sX} f(z) \pi(z)\right) {\bf 1}$ pointwise (or with respect to the norm induced by the inner product $\langle \cdot,\cdot \rangle_{\pi}$).

Let $\pi(f) = \sum_{z \in \sX} f(z) \pi(z)$. From a graph signal processing perspective, any function on $\sX$ is viewed as a graph signal residing in the space $(\mathbb{R}^{\sX}, \langle \cdot, \cdot \rangle_\pi)$, where the directed graph structure is induced by the transition probability $P$. The Graph Fourier Transform (GFT) is then defined as
\[
\hat f(x) = \langle f, f_x \rangle_\pi \in (\mathbb{R}^{\sX}, \langle \cdot, \cdot \rangle),
\]
where $\{f_x\}_{x \in \sX}$ are the eigenvectors of the Laplacian, with eigenvalues $\{\lambda_x\}_{x \in \sX} \subset [0,2]$. 

\begin{figure}[H]
\begin{center}
\scalebox{0.75}{
\begin{tikzpicture}[
    node distance=3cm,
    box/.style={draw, minimum width=2.8cm, minimum height=1.2cm},
    arrow/.style={-Latex, thick}
]

\node (f) {$f(x)$};
\node[box, right=of f] (gft) {GFT};
\node[right=of gft] (fh) {$\hat f(x)=\langle f,f_x\rangle_\pi$};

\draw[arrow] (f) -- (gft);
\draw[arrow] (gft) -- (fh);

\end{tikzpicture}
}
\end{center}
\caption{Graph Fourier Transform}
\end{figure}

A large eigenvalue $\lambda_x$ indicates strong signal variation (high frequency), whereas a small $\lambda_x$ (low frequency) corresponds to low variation (see Lemma~\ref{variation}). In the extreme case where $\lambda_{x^*} = 0$, we have $f_{x^*} = \mathbf{1}$, representing no variation at all. Since our goal is to compute \(\pi(f) = \langle f, f_{x^*} \rangle_{\pi}\), we should attenuate all high-frequency components except for \(\lambda_{x^*} = 0\); in other words, we need to design a low-pass filter.

\begin{figure}[H]
\begin{center}
\scalebox{0.75}{
\begin{tikzpicture}[
    node distance=3.5cm,
    box/.style={draw, minimum width=4cm, minimum height=3cm},
    arrow/.style={-Latex, thick}
]

\node (fh) {$\hat f(x)$};

\node[box, right=of fh] (lp) {};

\node[right=of lp] (out) {Filtered signal $\pi(f)$};

\draw[arrow] (fh) -- (lp);
\draw[arrow] (lp) -- (out);

\node at ([yshift=1.15cm]lp.center) {\small Low-Pass Filter};

\begin{scope}
\clip (lp.south west) rectangle (lp.north east);

\draw[thick] 
([xshift=-1.6cm,yshift=-0.9cm]lp.center) -- 
([xshift=1.6cm,yshift=-0.9cm]lp.center);

\draw[thick] 
([xshift=-1.6cm,yshift=-0.9cm]lp.center) -- 
([xshift=-1.6cm,yshift=0.9cm]lp.center);

\draw[thick]
([xshift=-1.6cm,yshift=0.7cm]lp.center) --
([xshift=-0.4cm,yshift=0.7cm]lp.center) --
([xshift=0.6cm,yshift=0.1cm]lp.center) --
([xshift=1.6cm,yshift=-0.6cm]lp.center);

\node at ([xshift=1.5cm,yshift=-1.1cm]lp.center) {\tiny $\lambda$};
\node at ([xshift=-1.7cm,yshift=1.1cm]lp.center) {\tiny $H(\lambda)$};

\end{scope}
\end{tikzpicture}
}
\end{center}
\caption{Low Pass Filter}
\end{figure}

The iteration in the ergodic theorem,
$$
\frac{(\Id+P+\ldots+P^{t-1})}{t} f
$$
can then be interpreted as a low-pass polynomial graph filter applied to the graph signal $f$. However, this particular polynomial filter is not optimal with respect to certain well-known objective functions. Motivated by this limitation, and following the classical signal processing literature \cite{KaMc95}, this paper develops algorithms to identify the optimal low-pass polynomial graph filter; one that achieves faster convergence to $\pi(f) \mathbf{1}$ than the standard ergodic average. The mathematical formulation of this approach is presented in the next section.

\section{Graph Signal Processing Approach to Birkhoff Ergodic Theorem}\label{graph_signal}

In this section, $\G = (\sX, E)$ denotes a directed graph with a vertex set $\sX$ (i.e. state space) and an edge set $E$. Therefore, if $(x, y) \in E$, there is a directed edge from $x$ to $y$, denoted as $x \rightarrow y$. Each directed edge in this graph has an associated weight given by our reversible (with respect to the distribution $\pi$) transition probability $P(x, y) \in \R_+$. Hence, $P$ satisfies the detailed balance equation:
$
\pi(x) P(x,y) = \pi(y) P(y,x) \,\, \forall x,y \in \sX.
$ 
In view of this construction, we endow $\R^{\sX}$ with the inner product $\langle \cdot, \cdot \rangle_{\pi}$ introduced in the previous section. We direct the reader to the book \cite{Ort22} for the terminology and notation related to graph signal processing used in this paper.

In graph $\G$, for any vertex $x$, its out-degree is $d_{out}(x) = \sum_{x \rightarrow y} P(x, y) = 1$. Consequently, the out-degree matrix for this graph is the identity matrix $\Id$. Therefore, the combinatorial graph Laplacian is defined as $L \triangleq \Id - P$. Given that $\sigma(P) \subset [-1,1]$, it follows $\sigma(L) \subset [0,2]$.


Now, it is time to define signals on graphs in the node domain. Later, we will describe graph signals in the frequency domain.

\begin{definition}
A graph signal $f$ is a mapping from  the vertex set $\sX$ to the reals.
\end{definition} 

In graph signal processing, the frequency of a signal is directly related to its variation across neighboring vertices. High signal variation indicates high frequency, while low variation suggests low frequency. To discuss these concepts accurately, we need to define the variation of a graph signal in our context. The next definition will provide this.

\begin{definition}
A variation of a graph signal $f$ is defined as
$$
TV_{f} \triangleq \frac{\left(\sum_{x \in \sX} \sum_{x \rightarrow y} \pi(x) P(x,y) |f(x) - f(y)|^2\right)^{1/2}}{\left(\sum_{x \in \sX} \pi(x) |f(x)|^2\right)^{1/2}}.
$$
\end{definition} 

In this definition, the denominator serves as a normalization constant to obtain a dimensionless quantity. The numerator quantifies the weighted difference of signals on vertices from their neighbors, with weights provided by the matrix $DP$, where $D \triangleq \diag(\pi)$. The value $\pi(x) P(x, y)$ represents how similar the vertices $x$ and $y$ are. Here, $\pi(x)$ indicates the importance of vertex $x$ in terms of the stationary distribution. If $\pi(x) \approx 0$, vertex $x$ holds little importance, as the number of visits to this vertex in the Markov chain is very low compared to other vertices according to the ergodic theorem. Additionally, $P(x, y)$ quantifies the (directed) similarity between vertex $x$ and $y$ or it can be interpreted as the importance of $y$ to $x$ because it is proportional to the number of transitions from vertex $x$ to $y$. Thus, the term $\pi(x) P(x, y) |f(x) - f(y)|^2$ captures the importance of vertex $x$ multiplied by the weighted variation of the graph signal $f$ from $x$ to $y$, with the significance of this variation determined by directed similarity $P(x, y)$. 

The following result is important for interpreting the eigenvalues of $L$ as frequencies. 

\begin{lemma}\label{variation}
We have $\sqrt{2 \langle f, Lf \rangle_{\pi}} = TV_{f}$ for any graph signal $f$ with unit norm $\|f\|_{\pi} = 1$. 
\end{lemma}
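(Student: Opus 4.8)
The plan is to expand both sides directly from the definitions and reconcile them using only two structural facts: that $P$ is row-stochastic ($\sum_{y} P(x,y) = 1$ for every $x$, equivalently $d_{out}(x)=1$), and that $P$ satisfies the detailed balance equation $\pi(x)P(x,y) = \pi(y)P(y,x)$. Since $\|f\|_\pi = 1$ by hypothesis, the denominator in the definition of $TV_f$ equals $1$, so it suffices to show that the squared numerator, namely $N(f) \triangleq \sum_{x \in \sX} \sum_{x \rightarrow y} \pi(x) P(x,y) |f(x) - f(y)|^2$, equals $2\langle f, Lf\rangle_{\pi}$.

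First I would rewrite $\langle f, Lf\rangle_{\pi}$. Using $L = \Id - P$ and the definition of $\langle\cdot,\cdot\rangle_{\pi}$,
\[
\langle f, Lf\rangle_{\pi} = \sum_{x \in \sX}\pi(x) f(x)^2 - \sum_{x \in \sX}\pi(x) f(x)\sum_{y \in \sX} P(x,y) f(y) = \sum_{x}\pi(x)f(x)^2 - \sum_{x,y}\pi(x)P(x,y)f(x)f(y).
\]
Next I would expand the square in $N(f)$ as $|f(x)-f(y)|^2 = f(x)^2 - 2 f(x)f(y) + f(y)^2$, splitting $N(f)$ into three sums. The cross term gives $-2\sum_{x,y}\pi(x)P(x,y)f(x)f(y)$, which already matches (twice) the second term above. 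For the first sum, $\sum_{x,y}\pi(x)P(x,y)f(x)^2 = \sum_x \pi(x) f(x)^2 \sum_y P(x,y) = \sum_x \pi(x) f(x)^2$ by row-stochasticity.

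The one genuinely substantive step — and the only place any hypothesis beyond row-stochasticity is used — is handling the third sum $\sum_{x,y}\pi(x)P(x,y)f(y)^2$, where $f(y)^2$ depends on the ``wrong'' index. Here I would invoke reversibility: substituting $\pi(x)P(x,y) = \pi(y)P(y,x)$ and then summing over $x$ first gives $\sum_{y}\pi(y)f(y)^2 \sum_x P(y,x) = \sum_y \pi(y) f(y)^2$. Combining the three pieces yields $N(f) = 2\sum_x \pi(x) f(x)^2 - 2\sum_{x,y}\pi(x)P(x,y)f(x)f(y) = 2\langle f, Lf\rangle_{\pi}$. Taking square roots (the quantity is nonnegative since $\sigma(L)\subset[0,2]$) and dividing by the unit denominator gives $TV_f = \sqrt{2\langle f, Lf\rangle_{\pi}}$. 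There is no real obstacle; the only thing to be careful about is bookkeeping the symmetrization via detailed balance in the $f(y)^2$ term, and noting that the argument in fact works without the normalization, giving the general identity $\langle f, Lf\rangle_{\pi}\,\|f\|_\pi^2 = \tfrac12 N(f)$ of which the stated claim is the unit-norm case.
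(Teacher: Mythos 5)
Your proof is correct and is essentially the paper's own argument run in the opposite direction: the paper symmetrizes $2\langle f,Lf\rangle_{\pi}$ into the perfect square $\sum_{x}\sum_{x\to y}\pi(x)P(x,y)(f(x)-f(y))^2$ using row-stochasticity and detailed balance, while you expand that square and match the three terms, using exactly the same two facts. One small slip in your closing aside: the unnormalized identity you actually prove is $2\langle f,Lf\rangle_{\pi}=N(f)$ (equivalently $TV_f^2\,\|f\|_\pi^2=2\langle f,Lf\rangle_\pi$), not $\langle f,Lf\rangle_{\pi}\,\|f\|_\pi^2=\tfrac12 N(f)$; this does not affect the proof of the stated lemma.
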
 

\begin{proof}
The proof of this lemma is very similar to \cite[Remark 3.1]{Ort22}. For any $x \in \sX$, we can write $Lf(x) = f(x) - \sum_{x \rightarrow y} P(x,y) f(y) = \sum_{x \rightarrow y} P(x,y) (f(x)-f(y))$ as $\sum_{x \rightarrow y} P(x,y)=1$. Hence, 
\small
\begin{align*}
2 \langle f, Lf \rangle_{\pi} 
&= \sum_{x \in \sX} \sum_{x \rightarrow y} \pi(x) P(x,y) f(x) (f(x)-f(y)) + \sum_{y \in \sX} \sum_{y \rightarrow x} \pi(y) P(y,x) f(y) (f(y)-f(x)) \\
&= \sum_{x \in \sX} \bigg(\sum_{x \rightarrow y} \pi(x) P(x,y) f(x) (f(x)-f(y)) + \sum_{y \rightarrow x} \pi(y) P(y,x) f(y) (f(y)-f(x)) \bigg)
\end{align*}
\normalsize
as $x \rightarrow y$ if and only if $y \rightarrow x$ by reversibility. Since $\pi(x) P(x,y) = \pi(y) P(y,x)$, we have 
$
\pi(y) P(y,x) f(y) = \pi(x) P(x,y) f(y). 
$
Hence, 
\begin{align*}
\sum_{x \rightarrow y} \pi(x) P(x,y) f(x) (f(x)-f(y))  
&+ \sum_{y \rightarrow x} \pi(y) P(y,x) f(y) (f(y)-f(x)) \\
&= \sum_{x \rightarrow y} \pi(x) P(x,y) (f(x)-f(y))^2.
\end{align*}
This implies that 
$
2 \langle f, Lf \rangle_{\pi} = \sum_{x \in \sX} \sum_{x \rightarrow y} \pi(x) P(x,y) (f(x)-f(y))^2 = TV_f^2
$
as $\|f\|_{\pi} = 1$. 
\end{proof}

This result implies that $\langle f, Lf \rangle_{\pi}$ is high when the variation of $f$ is high, and low when the variation of $f$ is low. Therefore, we can use $\langle f, Lf \rangle_{\pi}$ to quantify the variation of graph signal $f$. Let us write the eigenvalues of $L$ as $\{\lambda_x\}_{x \in \sX} \subset [0,2]$, where one of them is zero, say $\lambda_{x_*} = 0$. These eigenvalues have corresponding real eigenfunctions $\{f_x\}_{x \in \sX}$, which form an orthonormal basis for the space $(\R^{\sX},\langle\cdot,\cdot\rangle_{\pi})$ and $f_{x_*} = {\bf 1}$. For any eigenfunction $f_x$, the following holds:
$$
\langle f_x, Lf_x \rangle_{\pi} = \langle f_x, \lambda_x f_x \rangle_{\pi} = \lambda_x. 
$$
Therefore, if $\lambda_x$ is high, the variation of the corresponding eigenfunction $f_x$ is also high. This implies that eigenfunctions associated with high eigenvalues exhibit high frequencies. A similar conclusion can be drawn for eigenfunctions associated with low eigenvalues. The lowest frequency is $\lambda_{x_*} = 0$, which corresponds to the constant eigenfunction ${\bf 1}$, where the variation is zero. Hence, we can view eigenvalues $\{\lambda_x\}_{x \in \sX}$ of $L$ as frequencies. 

\subsection{Graph Filters}

For any vertex $x \in \sX$, its $k$-hop neighborhood, $\N_k(x)$, is the set all vertices that are part of a directed path with no more than $k$ edges starting from $x$. Now, it is time to give the definition of a graph filter. 

\begin{definition}
A (linear) graph filter $H$ is a linear operator on graph signals $f$ on $\sX$. 
\end{definition}

An operator $H$ on graph signals is called $k$-hop operator if $Hf(x)$ can be computed via the elements of $\N_k(x)$ for all $x \in \sX$. Hence,  $P$ and $L$ are one-hop operators. Let $Z$ represent a generic one-hop operator, such as $Z = P$, $Z = L$, or $Z = A$, among others. In graph signal processing, $Z$ serves as a generalization of the shift operator used in classical signal processing. For any positive integer $k$, $Z^k$ is a $k$-hop operator. Initially, our definition of a graph filter required only that the operator $H$ be linear, allowing any arbitrary linear operator to be chosen for $H$. However, we now focus on filters that can be expressed as polynomials of $Z$.

\begin{definition}
Given a one-hop operator $Z$, the polynomial graph filter $H$ is a linear operator of the following form $H = p(Z)$ for some polynomial $p$ of some degree $K$. Hence,
$
H = \sum_{k=0}^K a_k Z^k,
$
where $Z^0  = \Id$.  
\end{definition} 

To design a polynomial filter, one should first select a specific one-hop operator $Z$. In our case, this one-hop operator will be the combinatorial Laplacian $L$ since we have a precise interpretation of its eigenvalues as frequencies. Recall that the eigenvalues of $L$ are $\{\lambda_x\}_{x \in \sX} \subset [0,2]$, where $\lambda_{x_*} = 0$, with the corresponding real eigenfunctions $\{f_x\}_{x \in \sX}$ forming an orthonormal basis for the space $(\R^{\sX}, \langle \cdot, \cdot \rangle{\pi})$ and $f_{x_*} = \mathbf{1}$. Thus, any graph signal $f$ has the following unique representation:
$
f = \sum_{x \in \sX} \langle f,f_x \rangle_{\pi} f_x. 
$
This representation leads to the definition of graph Fourier transform.
\begin{definition}
The vector ${\hat f} \triangleq (\langle f,f_x \rangle_{\pi})_{x \in \sX}$ in $\R^{\sX}$, where $\R^{\sX}$ is now endowed with the usual Euclidean inner product $\langle \cdot,\cdot \rangle$, is defined as the graph Fourier transform (GFT) of the graph signal $f$. Let us denote the GFT operator via $\cL$ that performs this operation, and so, $\cL(f) = {\hat f}$. 
\end{definition}

Obviously $\cL$ is a linear operator from $(\R^{\sX},\langle\cdot,\cdot\rangle_{\pi})$ to $(\R^{\sX},\langle\cdot,\cdot\rangle)$. Moreover, for any $x \in \sX$, we have $\cL(f_x) = e_x$, where $e_x(y) \triangleq 1_{\{x = y\}}$. Hence, $\cL$ maps orthonormal basis $\{f_x\}_{x \in \sX}$ of $(\R^{\sX},\langle\cdot,\cdot\rangle_{\pi})$ to orhonormal basis $\{e_x\}_{x \in \sX}$ of $(\R^{\sX},\langle\cdot,\cdot\rangle)$. Therefore, it is a unitary transformation. Hence, the inverse of the GFT $\cL$ is its adjoint: $\cL^{-1} = \cL^*$, where $\cL^*$ is the unique operator from  $(\R^{\sX},\langle\cdot,\cdot\rangle)$ to $(\R^{\sX},\langle\cdot,\cdot\rangle_{\pi})$ satisfying the following:
$$
\langle \cL(f),{\hat g} \rangle = \langle f,\cL^*({\hat g}) \rangle_{\pi} \,\, \forall f \in (\R^{\sX},\langle\cdot,\cdot\rangle_{\pi}), {\hat g} \in (\R^{\sX},\langle\cdot,\cdot\rangle). 
$$
Hence $\cL^{-1}({\hat g})(x) = \sum_{y \in \sX} f_y(x) {\hat g}(y)$. 

Let $H$ be a polynomial graph filter of degree $K$; that is, $H = \sum_{k=0}^K a_k L^k \triangleq p(L)$. This filter then acts on any graph signal $f$ with GFT ${\hat f}$ in the following manner:
$$
H(f) = \sum_{x \in \sX} {\hat f}(x) \, p(\lambda_x) \, f_x. 
$$
Hence, in the frequency domain, the polynomial filter $H$ can be described as follows: 
$$
{\hat H} \triangleq \diag(\{p(\lambda_x)\}_{x \in \sX}) = p(\Lambda),
$$ 
where $\Lambda \triangleq \diag(\{\lambda_x\}_{x \in \sX})$; that is, $\Lambda$ is the diagonal matrix containing the eigenvalues of $L$ as its diagonal entries. Given ${\hat H}$, we can describe the action of the polynomial filter $H$ in the frequency domain as follows:
$
{\hat H}({\hat f}) = \diag(\{{\hat f}(x) \, p(\lambda_x)\}_{x \in \sX}) = \cL(H(f)). 
$
Unfortunately, since $\pi$ is generally unavailable, it is in general unfeasible to compute the GFT of any graph signal $f$ and frequency representation of any polynomial graph filter $H$. Consequently, filtering operations must be performed in the node domain instead of the frequency domain. Nevertheless, we can still gain insight into designing graph filters in the node domain by examining their effects in the frequency domain.

Now, we provide an informal definition of the problem we aim to solve in this paper. A precise mathematical description will follow later.

\begin{tcolorbox} 
[colback=white!100]
{\bf Informal Definition of the Problem:} For each degree $K$, find the optimal polynomial filter $H_K$ with the corresponding polynomial $p_K$ such that $H_K(f)$ converges to $\pi(f){\bf 1}$ as quickly as possible. To achieve this convergence, the polynomial $p_K$ should attenuate the eigenvalues (or frequencies) other than $0$ of $L$. Therefore, the polynomial graph filter should function as a low-pass filter. 
\end{tcolorbox}

\subsection{Revisiting Birkhoff Ergodic Theorem and Problem Formulation}\label{birkhoffsection}

\hspace{1pt} Note that the iteration in the ergodic theorem can be written as a polynomial of $P$ in the following form:
$
\frac{(\Id+P+\ldots+P^{t-1})}{t} \triangleq p_t(P),
$
where 
$p_t(z) \triangleq \sum_{k=0}^{t-1} \frac{1}{t} z^k = \frac{1}{t} \frac{1-z^t}{1-z}, \,\,\, z \in [-1,1).$
If we define the following polynomial $q_t(z) \triangleq p_t(1-z)$, then we can write the same expression in terms of $L$ as follows:
\begin{align*}
p_t(P) &= q_t(L) = \frac{1}{t} \frac{1-(1-L)^t}{1-(1-L)} = \frac{1}{t} \frac{1-\sum_{k=0}^t {t \choose k} (-L)^k}{L} \\
&= \sum_{k=1}^t \frac{1}{t} {t \choose k} (-1)^{k-1} L^{k-1} \triangleq \sum_{k=0}^{t-1} a_e(k) \, L^{k}. 
\end{align*}
If we define the polynomial graph filter $H_t \triangleq q_t(L)$, then we observe that we are indeed applying the polynomial graph filter $H_t$ to the graph signal $f$ in each iteration of the ergodic theorem. As $t$ approaches infinity, this graph filter attenuates the frequencies different from zero. Specifically,
$$
q_t(z) \rightarrow 0 \,\, \text{as} \,\, t \rightarrow \infty
$$
for $z \in (0,2]$ and $q_t(0)=1$. This can be proved very easily since $p_t(z) \rightarrow 0$ as $t \rightarrow \infty$ for $z \in [-1,1)$. Consequently, $H_t(f) \rightarrow \pi(f) {\bf 1}$. Hence, in each iteration of the ergodic theorem, we are applying a low-pass filter to the graph signal $f$. However, this low-pass filter is not the optimal one. The goal of this paper is to find the optimal low-pass filter in each iteration. This problem is very similar to the problem of designing optimal low-pass filters in classical signal processing \cite{KaMc95}. Therefore, as in classical signal processing, we use results from approximation theory.

Given the non-zero eigenvalues $\{\lambda\}_{x \in \sX \setminus \{x_*\}} \subset (0,2]$ of $L$, it is not the case that $q_t(\lambda_x) = 0$ for all $x \in \sX \setminus \{x_*\}$ for sufficiently large $t$ values, where $q_t$ is the polynomial used in the ergodic theorem. However, if we have complete knowledge of the eigenvalues $\{\lambda\}_{x \in \sX}$ of $L$, then we can use a Lagrange polynomial $q$ of degree $|\sX|-1$ \cite[Definition 6.1]{VeKoGo14} to achieve:
$$
q(\lambda_{x_*})=1 \,\, \text{and} \,\, q(\lambda_x) = 0 \,\, \text{for} \,\, x \neq x_*.
$$
In this case, we set $q_t^L(z) = (1 + z + \ldots + z^{t - |\sX|}) \, q(z)$ for $t \geq |\sX|$. Hence 
$$
q_t^L(L) \, f = \sum_{x \in \sX} {\hat f}(x) \, q_t^L(\lambda_x) \, f_x = \pi(f) {\bf 1}.
$$ 
That is, we can filter out frequencies that are different from zero. However, in general, we do not have access to the full set of eigenvalues of $L$ and so constructing $q_t^L$ is unfeasible. 

In general, the most that one can do is to estimate some important eigenvalues different than $\lambda_{x_*}$. For instance, we may obtain some lower bound to the second smallest eigenvalue of $L$ \cite{DiSt91} \cite[Section 9.2]{Bre01}. Indeed, if $\lambda_{x_{**}}$ denotes the second smallest eigenvalue of $L$, then $\lambda_{x_{**}}$ is equal to the spectral gap of $P$. In this case, if we have some lower bound on this second smallest eigenvalue:
$
\lambda_{x_{**}} \geq \lambda_{\low},
$
then the best approach would be to design an ideal low-pass filter:
$$
q^{\ideal}(\lambda_{x_*}) = 1 \,\, \text{and} \,\, q^{\ideal}(\lambda) = 0 \,\, \text{for} \,\, 2 \geq \lambda \geq \lambda_{\low}. 
$$
However, achieving this is impractical because an ideal low-pass filter demands an infinite number of components, as highlighted in classical signal processing \cite{VeKoGo14}. Therefore, rather than using an infinite-degree polynomial, we seek a polynomial of degree $t$ at each time step $t$ that meets the following criteria:
$$
q_t(\lambda_{x_*}) = 1 \,\, \text{and} \,\, q_t(\lambda) \approx 0 \,\, \text{for} \,\, 2 \geq \lambda \geq \lambda_{\low}. 
$$
Here, we also want to converge to zero for values in $[\lambda_{\low},2]$ as quickly as possible while maintaining $q_t(\lambda_{x_*}) = 1$. In essence, we aim to closely approximate the ideal low-pass filter in the most efficient manner. To achieve this, we can view our problem as designing an optimal polynomial low-pass graph filter with a one-hop operator $L$ to eliminate frequencies above $\lambda_{\low}$, where $\lambda_{\low}$ is a fixed threshold. This problem shares strong parallels with the design of optimal FIR low-pass filters in classical signal processing \cite{KaMc95}. Indeed, in our work, we rely on concepts and results from approximation theory that are quite similar to those used in optimal FIR low-pass filter design. 

Let $P_K[\lambda_{\low},2]$ be the set of all polynomials of degree at most $K$. Then, at each time step $t\geq2$, we want to solve the following optimization problem:
\begin{tcolorbox} 
[colback=white!100]
$$
{\bf (OPT_t)} \min_{p \in P_{t-1}[\lambda_{\low},2]} \|p\|_{\infty}^{[\lambda_{\low},2]} \,\, \text{subject to} \,\, p(0) = 1.
$$
\end{tcolorbox} 
\noindent Here
$$\|p\|_{\infty}^{[\lambda_{\low},2]} \triangleq \sup_{z \in [\lambda_{\low},2]} |p(z)|.$$ 

\begin{remark}
The optimization problem described above is precisely the same as the one underlying Chebyshev acceleration, also known as the Chebyshev semi-iterative method \cite{GoVa61, Man77, Saa11}. Mathematically, therefore, our optimal graph filter problem is equivalent to the filtering problem encountered in the Chebyshev semi-iterative method. This equivalence is not surprising, as such optimal filtering problems are well established in the numerical linear algebra and signal processing literature. Indeed, the solution is given by a classical result from approximation theory concerning Chebyshev polynomials. Consequently, from a purely mathematical standpoint, this problem has already been extensively treated across various communities, beginning with approximation theory and extending to signal processing. We therefore make no claim of novelty in this regard; we are simply applying well-known results from approximation theory. The same holds for the other filter designs considered in this paper (Bernstein and Legendre).

Our contribution lies instead in the novel perspective we bring to this problem through the lens of graph signal processing. This reinterpretation is significant because it opens the door to applying a broader range of well-established filtering techniques, such as IIR filters, to further accelerate convergence. Moreover, by framing the problem within signal processing, one can now draw upon a rich set of tools and methodologies from that field to address related challenges in this setting. 
\end{remark}

Let us now give a motivation for this optimization problem. Note that, originally, we want to minimize $\|p_t(L)f-\pi(f){\bf 1}\|_{\pi}$ for any unit norm graph signal $f$, where $p_t \in P_{t-1}[\lambda_{\low},2]$ with $p_t(0)=1$, for each time step $t\geq2$ as any graph signal is viewed as an element of the inner product space $(\R^{\sX},\langle\cdot,\cdot\rangle_{\pi})$\footnote{In this context, we can use any norm we prefer (for instance, sup-norm) because the interpretation at the end will be the same.}. An upper bound to this term can be obtained as follows:
\begin{align*}
&\|p_t(L)f-\pi(f){\bf 1}\|_{\pi} = \left\| \sum_{x \in \sX} {\hat f}(x) p_t(\lambda_x) f_x - {\hat f}(x_*) f_{x_*} \right\|_{\pi} 
= \left\| \sum_{x \neq x_*} {\hat f}(x) p_t(\lambda_x) f_x \right\|_{\pi} \\
&\leq \sum_{x \neq x_*} |{\hat f}(x)| |p_t(\lambda_x)| \left\| f_x \right\|_{\pi} 
= \sum_{x \neq x_*} |{\hat f}(x)| |p_t(\lambda_x)| 
\leq \left(\sum_{x \neq x_*} |{\hat f}(x)|\right) \sup_{x \neq x_*} |p_t(\lambda_x)| 
\end{align*}  
In the last inequality above, the first term depends on the graph signal $f$, and so, we cannot control this. Thus, our objective should be to minimize the second term by choosing $p_t$ appropriately. However, because we do not have access to the full eigenvalue structure, we cannot directly minimize the second term. Given the lower bound $\lambda_{\low}$ for the second smallest eigenvalue of $L$, we can instead try to minimize $\|p_t\|_{\infty}^{[\lambda_{\low}, 2]}$, which leads to the optimization problem ${\bf (OPT_t)}$.

\section{Filter Design with Bernstein Polynomials}\label{bernstein}

\hspace{1pt} For each time step $t\geq2$, we want to design a polynomial $p$ of degree at most $t-1$ so that 
$$
p(\lambda_{x_*}) = 1 \,\, \text{and} \,\, p(z) \approx 0 \,\, \text{for} \,\, 2 \geq z \geq \lambda_{\low}. 
$$
In this sub-section, instead of pursuing the optimal design formulated in ${\bf (OPT_t)}$, we approach the problem differently. Firstly, the behavior of the  ideal low-pass filter over the interval $(0,\lambda_{\low})$ is not critical. Therefore, let us fix a continuous function $g$ on $[0,2]$ such that $g(\lambda_{x_*}) = g(0) = 1$ and $g(z) = 0$ for $2 \geq z \geq \lambda_{\low}$; that is, $g$ can be considered as an ideal low-pass filter given $\lambda_{\low}$. We have the freedom to choose the behavior of $g$ on the interval $(0,\lambda_{\low})$ as desired. This choice is also part of the design process. Note that Weierstrass approximation theorem \cite{Riv81} states that any continuous function over compact domain can be approximated via polynomials under sup-norm. This theorem is generally proved constructively using Bernstein polynomials (see the proof of \cite[Theorem 1.1]{Riv81}). Therefore, we can use Bernstein polynomials to approximate our ideal low-pass filter $g$ over $[0,2]$. To this end, let us first define Bernstein polynomials. 

\begin{definition}
Given any function $h$ that is bounded on $[0,1]$, we define its Bernstein polynomial of degree $K$ by 
$
B_K(h;z) \triangleq \sum_{l=0}^K h\left(\frac{l}{K}\right) {K \choose l} z^l \, (1-z)^{K-l}.
$
\end{definition}

One can prove that (see \cite[Theorem 1.2]{Riv81})
$
\|h-B_K(h)\|_{\infty}^{[0,1]} \leq \frac{3}{2} \, \omega_h\left(\frac{1}{\sqrt{K}}\right), 
$
where $\omega_h$ is the modulus of continuity of $h$. Since our function $g$ is defined over $[0,2]$ and Bernstein polynomials are defined over $[0,1]$, we need to translate this discussion into the interval $[0,2]$. The Bernstein polynomial of degree $K$ of $g$ is defined as 
$$
B_K(g;z) \triangleq \sum_{l=0}^K g\left(\frac{2l}{K}\right) {K \choose l} \left(\frac{z}{2}\right)^l \, \left(1-\frac{z}{2}\right)^{K-l}.
$$
Then, by above result, we have
$$
\|g-B_K(g)\|_{\infty}^{[0,2]} \leq \frac{3}{2} \, \omega_g\left(\frac{2}{\sqrt{K}}\right). 
$$
Additionally, $B_K(g;\lambda_{x_*}) = B_K(g;0) = 1$. Thus, $B_K(g)$ is a strong candidate for our polynomial filter. However, given the above rate of convergence result, to achieve a sharp decrease in the function value of $B_K(g)$ over the interval $[\lambda_{\low},2]$, it makes sense to select $g$ or adjust its behavior on the interval $(0,\lambda_{\low})$ so that $\omega_g$ is minimized at $\frac{2}{\sqrt{K}}$. Consequently, we propose the following optimization problem for each time step $t \geq 2$:
\begin{tcolorbox} 
[colback=white!100]
\begin{align*}
{\bf (OPT^B_t)} &\min_{g \in C[0,2]} \omega_g\left(\frac{2}{\sqrt{t-1}}\right) \,\, \text{subject to} \,\, g(0) = 1, \,\, g([\lambda_{\low},2]) = 0.
\end{align*}
\end{tcolorbox}  

Solving this optimization problem for any $t \geq 2$ is generally challenging and impractical to perform at each time step $t \geq 2$. However, as $t$ approaches infinity, we can demonstrate that the triangle function $g_{\Delta}$, which decreases from 1 to zero as we move from 0 to $\lambda_{\low}$, is proved to be asymptotically optimal for this problem. Therefore, for each $t \geq 2$, we can select our ideal low-pass filter $g$ as $g_{\Delta}$ to design our Bernstein filter.

\begin{tcolorbox} 
[colback=white!100]
\begin{align*}
&\text{{\bf Asymptotically Optimal Bernstein Filter}} \\ &B_{K}(g_{\Delta};z) \triangleq \sum_{l=0}^K g_{\Delta}\left(\frac{2l}{K}\right) {K \choose l} \left(\frac{z}{2}\right)^l \, \left(1-\frac{z}{2}\right)^{K-l}
\end{align*}
\end{tcolorbox}

Let us now demonstrate that the triangle function is asymptotically optimal. First, the modulus of continuity, denoted as $\omega_{\Delta}$, for the triangle function $g_{\Delta}$ is given by the following:
$$
\omega_{\Delta}(\delta) = \begin{cases} m \delta & \text{if} \,\, \delta \in [0,\lambda_{\low}) \\ 1 & \text{otherwise} \end{cases}
$$
where $m \triangleq 1/\lambda_{\low}$. Note that if $2/\sqrt{t-1} \geq \lambda_{\low}$, then for any $g \in C[0,2]$ satisfying the constraints in ${\bf (OPT^B_t)}$, we have $\omega_g\left(2/\sqrt{t-1}\right) = 1$. Hence, any feasible $g \in C[0,2]$ solves ${\bf (OPT^B_t)}$. For this reason, in the following, we assume that $2/\sqrt{t-1} < \lambda_{\low}$.

\begin{proposition}
If $g \in C[0,2]$ is an optimal solution of ${\bf (OPT^B_t)}$, where $ r \triangleq 2/\sqrt{t-1} < \lambda_{\low}$, then 
$
\omega_{\Delta}(r) \frac{n+\lambda}{n+1+\lambda} \leq \omega_g(r) < \omega_{\Delta}(r).
$
Here, $\lambda_{\low} = n r + \lambda r$ and $\lambda \in (0,1)$. Hence, as $t \rightarrow \infty$, we have $\lambda \rightarrow 0$, $n \rightarrow \infty$, and  $\frac{n+\lambda}{n+1+\lambda} \rightarrow 1$. This implies that $g_{\Delta}$ is asymptotically optimal.
\end{proposition}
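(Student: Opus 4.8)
The plan is to bound the modulus of continuity $\omega_g(r)$ of any feasible (hence optimal) $g$ from below by exploiting the two boundary constraints $g(0)=1$ and $g(z)=0$ on $[\lambda_{\low},2]$, together with the continuity of $g$, and then to compare this lower bound with $\omega_{\Delta}(r)=mr = r/\lambda_{\low}$. The key observation is a ``telescoping'' argument: pick the points $0, r, 2r, \ldots, nr, \lambda_{\low}$ along the interval $[0,\lambda_{\low}]$, where $n = \lfloor \lambda_{\low}/r \rfloor$ and $\lambda_{\low} = nr + \lambda r$ with $\lambda \in (0,1)$. Since $g(0)=1$ and $g(\lambda_{\low})=0$, the total drop of $g$ across these $n+1$ consecutive sub-intervals (the $n$ intervals of length $r$ plus the last one of length $\lambda r < r$) is exactly $1$. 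Each such increment is at most $\omega_g(r)$ in absolute value, so by the triangle inequality $1 \le (n+1)\,\omega_g(r)$, which does not yet give the stated constant; the refinement comes from noting that the last increment, over an interval of length only $\lambda r$, is at most $\omega_g(\lambda r) \le \lambda\,\omega_g(r)$ if one uses that $\omega_g$ is subadditive and monotone — more carefully, $\omega_g(\lambda r) \le \lceil \lambda \rceil \omega_g(r) = \omega_g(r)$ is too weak, so instead I would use the standard inequality $\omega_g(\alpha\delta)\le(1+\alpha)\,\omega_g(\delta)$ or simply $\omega_g(\lambda r)\le \omega_g(r)$ on the short interval while being slightly more economical on the count. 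The cleanest route giving exactly $\frac{n+\lambda}{n+1+\lambda}$ is: $1 \le n\,\omega_g(r) + \omega_g(\lambda r)$ and, writing $\omega_g$ in terms of a near-linear majorant, deduce $\omega_g(r) \ge \frac{1}{n+1+\lambda}\cdot(n+\lambda)\cdot\frac{1}{?}$ — I will need to track the algebra so that it lands on $\omega_{\Delta}(r)\,\frac{n+\lambda}{n+1+\lambda}$, recalling $\omega_{\Delta}(r) = r/\lambda_{\low} = r/((n+\lambda)r) = 1/(n+\lambda)$.

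For the upper bound $\omega_g(r) < \omega_{\Delta}(r)$, I would argue that $g_{\Delta}$ itself is feasible for ${\bf (OPT^B_t)}$, so the optimal value satisfies $\omega_g(r) \le \omega_{\Delta}(r)$; strictness follows because the triangle function is the ``steepest'' admissible profile only in an averaged sense — any $g$ attaining $\omega_g(r) = \omega_{\Delta}(r) = 1/(n+\lambda)$ would have to realize a slope of exactly $1/\lambda_{\low}$ on every length-$r$ sub-interval of $[0,\lambda_{\low}]$, which (combined with $g$ reaching $0$ at $\lambda_{\low}$ and staying $0$ afterward, plus continuity at $\lambda_{\low}$) forces $g=g_{\Delta}$ on $[0,\lambda_{\low}]$, but then the increment over the last sub-interval of length $\lambda r$ is only $\lambda r/\lambda_{\low} < \omega_{\Delta}(r)$, and since $n+\lambda$ is not an integer one can redistribute to get a strictly smaller modulus — hence no feasible $g$ achieves equality, giving the strict inequality. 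I would streamline this by exhibiting an explicit feasible $g$ (a piecewise-linear function with $n+1$ equal slopes, i.e. slope $1/\lambda_{\low}$ adjusted so the drop over the last short piece is also $r$-controlled) whose modulus at $r$ equals the lower bound $\omega_{\Delta}(r)\frac{n+\lambda}{n+1+\lambda}$, thereby simultaneously proving optimality and the strict upper bound.

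Finally, the asymptotic conclusion is immediate: as $t\to\infty$ we have $r = 2/\sqrt{t-1}\to 0$, so $n = \lfloor \lambda_{\low}/r\rfloor \to \infty$ and $\lambda\in(0,1)$ stays bounded, whence $\frac{n+\lambda}{n+1+\lambda}\to 1$; the sandwich $\omega_{\Delta}(r)\frac{n+\lambda}{n+1+\lambda}\le \omega_g(r) < \omega_{\Delta}(r)$ then shows that the optimal value of ${\bf (OPT^B_t)}$ is asymptotic to $\omega_{\Delta}(r)$, i.e. $g_{\Delta}$ is asymptotically optimal.

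I expect the main obstacle to be pinning down the exact constant $\frac{n+\lambda}{n+1+\lambda}$: the crude telescoping gives $\frac{1}{n+1}$-type bounds, and extracting the sharper factor requires carefully accounting for the short terminal sub-interval of length $\lambda r$ and using the right elementary property of the modulus of continuity (monotonicity plus the subadditivity-type estimate on short intervals) rather than a blunt count. Constructing the explicit extremal-up-to-$\epsilon$ feasible $g$ that matches the lower bound — and verifying it is genuinely feasible (continuous, $g(0)=1$, vanishing on $[\lambda_{\low},2]$) while its modulus at $r$ hits the claimed value — is the step where the bookkeeping must be done with care.
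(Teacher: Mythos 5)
Your plan is essentially the paper's proof: the paper's entire lower-bound argument is a single application of \cite[Lemma 1.3]{Riv81}, i.e.\ the inequality $\omega_h(\alpha\delta)\le(1+\alpha)\,\omega_h(\delta)$, which is exactly the telescoping you describe, and the upper bound is, as you say, just feasibility of $g_{\Delta}$ plus optimality of $g$. The algebra you left hanging with a ``?'' closes immediately from the pieces you already have: since $g(0)=1$ and $g(\lambda_{\low})=0$,
\begin{align*}
1 \;\le\; \omega_g\bigl((n+\lambda)r\bigr) \;\le\; n\,\omega_g(r)+\omega_g(\lambda r) \;\le\; (n+1+\lambda)\,\omega_g(r),
\end{align*}
using $\omega_g(\lambda r)\le(1+\lambda)\,\omega_g(r)$, and since $\omega_{\Delta}(r)=r/\lambda_{\low}=1/(n+\lambda)$ this is precisely $\omega_g(r)\ge\tfrac{n+\lambda}{n+1+\lambda}\,\omega_{\Delta}(r)$. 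Your worry that the ``crude'' count only yields a $\tfrac{1}{n+1}$-type bound is misplaced: monotonicity alone gives $\omega_g(\lambda r)\le\omega_g(r)$ and hence $\omega_g(r)\ge\tfrac{1}{n+1}$, which is \emph{stronger} than the claimed $\tfrac{1}{n+1+\lambda}$, and a stronger lower bound trivially implies the stated one. On the upper bound, note that the paper's own proof only establishes $\omega_g(r)\le\omega_{\Delta}(r)$ (strictness is asserted in the statement but not derived), so your optimality argument already matches what the paper proves; your idea of exhibiting an explicit feasible function beating $g_{\Delta}$ would in fact supply the strictness, though the extremal value is $\tfrac{1}{n+1}$ (achieved by an $r$-periodic ``staircase'' whose drop set meets every window of length $r$ in measure exactly $\lambda r$), not the lower bound $\tfrac{1}{n+1+\lambda}$, which is not attained. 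The asymptotic conclusion is handled identically in both arguments.
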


\begin{proof}
Let $g \in C[0,2]$ be an optimal solution\footnote{If the optimal solution does not exist, we can use an $\varepsilon$-optimal solution for a sufficiently small $\varepsilon > 0$. In this case, we should replace $\omega_{\Delta}$ with $\omega_{\Delta} + \varepsilon$ in the second inequality stated in the proposition.} of ${\bf (OPT^B_t)}$, where $r < \lambda_{\low}$. By \cite[Lemma 1.3]{Riv81}, we have 
\begin{align*}
1 = \omega_{g}(nr+\lambda r) \leq (n+\lambda+1) \omega_g(r)  
\leq (n+\lambda+1) \omega_{\Delta}(r).
\end{align*}
But note that $1 = \omega_{\Delta}(nr+\lambda r) = (n+\lambda) \omega_{\Delta}(r)$. Hence
$
(n+\lambda)  \omega_{\Delta}(r) \leq (n+\lambda+1) \omega_g(r)  \leq (n+\lambda+1) \omega_{\Delta}(r).
$
Dividing both sides with $(n+\lambda+1)$ establishes the result.
\end{proof}

Another justification for why the triangle function is a good candidate for ideal low-pass filter is the following. Instead of considering the space $C[0,2]$, we can work with the set of continuously differentiable, except possibly at $\lambda_{\low}$, functions on $[0,2]$, denoted as $C^1[0,2]$. Since estimating the modulus of continuity exactly can be challenging, we can use the inequality $\omega_g(\delta) \leq \|g^{(1)}\|_{\infty}^{[0,2]} \delta$ for all $\delta \geq 0$ to obtain an upper bound on the modulus of continuity. This allows us to state a more manageable optimization problem, which does not depend on time.
\begin{tcolorbox} 
[colback=white!100]
\begin{align*}
{\bf (\widehat{OPT}_B)} &\min_{g \in C^1[0,2]} \|g^{(1)}\|_{\infty}^{[0,2]} \,\,\text{subject to} \,\, g(0) = 1, \,\, g([\lambda_{\low},2]) = 0.
\end{align*}
\end{tcolorbox}

The solution of the last problem is again the triangle function $g_{\Delta}$ as stated in the next result.

\begin{proposition}
The triangle function $g_{\Delta}$ is an optimal solution of ${\bf (\widehat{OPT}_B)}$.
\end{proposition}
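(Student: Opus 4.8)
The plan is to show that any feasible $g \in C^1[0,2]$ must have $\|g^{(1)}\|_{\infty}^{[0,2]} \geq m = 1/\lambda_{\low}$, and that $g_{\Delta}$ attains this value. The lower bound follows from the fundamental theorem of calculus applied on the interval $[0,\lambda_{\low}]$: since $g$ is continuously differentiable there (the only possible point of non-differentiability is $\lambda_{\low}$ itself, which is the right endpoint), and since the constraints force $g(0)=1$ and $g(\lambda_{\low})=0$, we have
\begin{align*}
1 = g(0) - g(\lambda_{\low}) = -\int_0^{\lambda_{\low}} g^{(1)}(z)\,dz \leq \int_0^{\lambda_{\low}} |g^{(1)}(z)|\,dz \leq \lambda_{\low} \, \|g^{(1)}\|_{\infty}^{[0,\lambda_{\low}]} \leq \lambda_{\low} \, \|g^{(1)}\|_{\infty}^{[0,2]}.
\end{align*}
Rearranging gives $\|g^{(1)}\|_{\infty}^{[0,2]} \geq 1/\lambda_{\low} = m$, so $m$ is a lower bound on the optimal value.

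Next I would verify that $g_{\Delta}$ is feasible and achieves this bound. The triangle function is $g_{\Delta}(z) = 1 - z/\lambda_{\low}$ on $[0,\lambda_{\low}]$ and $g_{\Delta}(z) = 0$ on $[\lambda_{\low},2]$; it is continuous on $[0,2]$, continuously differentiable on $[0,2]\setminus\{\lambda_{\low}\}$, satisfies $g_{\Delta}(0)=1$ and $g_{\Delta}([\lambda_{\low},2])=\{0\}$, hence is feasible for ${\bf (\widehat{OPT}_B)}$. Its derivative is $-m$ on $(0,\lambda_{\low})$ and $0$ on $(\lambda_{\low},2)$, so $\|g_{\Delta}^{(1)}\|_{\infty}^{[0,2]} = m$. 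Combining with the lower bound, $g_{\Delta}$ is optimal.

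I do not expect any serious obstacle here; the argument is a one-line application of the mean value theorem (or the fundamental theorem of calculus) together with an explicit check of the candidate. The only point requiring mild care is the handling of the non-differentiability of admissible functions at the single point $\lambda_{\low}$: one should note that the estimate above only uses $g^{(1)}$ on the open interval $(0,\lambda_{\low})$, where every feasible $g$ is genuinely $C^1$, so the potential kink at $\lambda_{\low}$ causes no trouble. One might also remark that the optimal solution is not unique as a function on all of $[0,2]$ (the behavior on $(\lambda_{\low},2)$ is forced, but nothing pins down $g$ uniquely if one relaxes constraints), though on $[0,\lambda_{\low}]$ uniqueness of the optimizer among those attaining $\|g^{(1)}\|_\infty = m$ would require $g^{(1)} \equiv -m$, i.e. exactly the triangle.
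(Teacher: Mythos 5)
Your argument is correct and is essentially the paper's proof: the paper applies the mean value theorem on $[0,\lambda_{\low}]$ to find a point where $|g^{(1)}|=1/\lambda_{\low}$, which is the same lower bound you derive via the fundamental theorem of calculus, followed by the same verification that $g_{\Delta}$ attains it. Your extra remarks on the kink at $\lambda_{\low}$ and on uniqueness are fine but not needed.
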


\begin{proof}
According to the mean value theorem, for any function $g \in C^1[0,2]$ satisfying $g(0) = 1$ and $g([\lambda_{\low},2]) = 0$, there exists an $l \in (0,\lambda_{\low})$ such that $|g^{(1)}(l)| = 1/\lambda_{\low}$, and so, $\|g^{(1)}\|_{\infty}^{[0,2]} \geq 1/\lambda_{\low}$. Furthermore, note that for the triangle function, $\|g^{(1)}_{\Delta}\|_{\infty}^{[0,2]} = 1/\lambda_{\low}$. 
\end{proof}

\section{Filter Design with Chebyshev Polynomials}\label{chebyshev}

\hspace{1pt} Recall that for each time step $t\geq2$, we want to solve the following optimization problem:
\begin{tcolorbox} 
[colback=white!100]
$$
{\bf (OPT_t)} \min_{p \in P_{t-1}[\lambda_{\low},2]} \|p\|_{\infty}^{[\lambda_{\low},2]} \,\, \text{subject to} \,\, p(0) = 1.
$$
\end{tcolorbox} 
In this section, we establish that the optimal solution of 
${\bf (OPT_t)} $ is a properly normalized Chebyshev polynomial of degree $t-1$, for any $t \geq 2$. Since we have explicit forms of Chebyshev polynomials of any degree, we can easily implement the corresponding filter. Before proving this result, let us briefly discuss Chebyshev polynomials and outline one of their key properties, which will lead us to the optimal solution of ${\bf (OPT_t)}$. This property can be stated as follows: among all polynomials with a fixed degree and a sup-norm less than $1$ over any interval $[a,b]$, Chebyshev polynomials are the ones that increase as quickly as possible compared to other polynomials outside this interval. Using this property, it becomes clear that Chebyshev polynomials are the optimal solutions of ${\bf (OPT_t)}$.  

Chebyshev polynomials are originally defined over the interval $[-1,1]$. Therefore, we begin by defining these polynomials over this interval and stating their maximum growth property. Then, we translate everything to the interval $[\lambda_{\low},2]$ in our problem formulation. 

\begin{definition}
The Chebyshev polynomial of degree $K$ is defined as follows:
$
T_K(z) \triangleq \cos (K \theta),
$
where $z = \cos(\theta)$ and $\theta \in [0,\pi]$. 
\end{definition}

Note that since $\cos (K \theta)$ is a polynomial of $\cos(\theta)$ with degree $K$ \cite[Exercise 1.6, p. 43]{Riv81}, $T_K$ is well-defined. Moreover, since it is a polynomial, once we know its coefficients, we know it as a function for all $z$. The following result states that Chebyshev polynomials grows as rapidly as possible outside of $[-1,1]$. 

\begin{theorem}{\cite[Theorem 1.10, p. 31]{Riv81}}\label{growth}
If $p \in P_K[-1,1]$ and $|z_0| \geq 1$, then 
$
|p^{(l)}(z_0)| \leq \|p\|_{\infty}^{[-1,1]} \, |T_K^{(l)}(z_0)|, \,\, l=0,1,\ldots,K,
$
and in particular, we have 
$
|p(z_0)| \leq \|p\|_{\infty}^{[-1,1]} \, |T_K(z_0)|.
$
\end{theorem}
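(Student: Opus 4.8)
The plan is to derive this from the equioscillation property of $T_K$ by a pure zero-counting argument, using nothing from approximation theory. I would first reduce to the normalized case $\|p\|_{\infty}^{[-1,1]}=1$ (scale $p$; the case $p\equiv 0$ is trivial) and to $|z_0|>1$, recovering the endpoints $z_0=\pm1$ by letting $z_0\to\pm1$ in the inequality once it is known on $|z_0|>1$. The facts I would invoke, all immediate from $T_K(\cos\theta)=\cos(K\theta)$, are: (i) $\|T_K\|_{\infty}^{[-1,1]}=1$; (ii) setting $x_j\triangleq\cos(j\pi/K)$ for $j=0,\dots,K$ gives $K+1$ points of $[-1,1]$ with $T_K(x_j)=(-1)^j$; and (iii) $T_K$ has $K$ simple zeros in $(-1,1)$, hence by iterating Rolle's theorem $T_K^{(l)}$ has all $K-l$ of its zeros in $(-1,1)$ for $0\le l\le K$, so in particular $T_K^{(l)}(z_0)\neq 0$ whenever $|z_0|>1$.

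The heart of the proof is a single contradiction argument. Suppose $|p^{(l)}(z_0)|>|T_K^{(l)}(z_0)|$ for some $|z_0|>1$ and some $0\le l\le K$. By (iii) the right-hand side is positive, so $p^{(l)}(z_0)\neq 0$, and $\mu\triangleq T_K^{(l)}(z_0)/p^{(l)}(z_0)$ satisfies $|\mu|<1$. Form $g\triangleq T_K-\mu p$, a real polynomial of degree at most $K$. At each node $x_j$ we have $|\mu p(x_j)|\le|\mu|\,\|p\|_{\infty}^{[-1,1]}=|\mu|<1=|T_K(x_j)|$, so $g(x_j)$ is nonzero and carries the sign of $(-1)^j$; hence $g$ changes sign across each of the $K$ intervals $(x_{j+1},x_j)$ and therefore has at least $K$ distinct zeros in $(-1,1)$.

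It remains to exploit the vanishing of the $l$-th derivative of $g$ at $z_0$. For $l=0$ this is immediate: $g(z_0)=0$ contributes one further zero, lying outside $[-1,1]$, so $g$ has at least $K+1$ zeros while having degree at most $K$, forcing $g\equiv 0$; then $T_K=\mu p$ and $1=\|T_K\|_{\infty}^{[-1,1]}=|\mu|<1$, a contradiction, which proves the displayed ``in particular'' inequality. For general $l$, I would apply Rolle's theorem $l$ times to the $K$ interior zeros of $g$, producing at least $K-l$ distinct zeros of $g^{(l)}$ in $(-1,1)$; adjoining $z_0$ (distinct, since it lies outside $[-1,1]$) gives at least $K-l+1$ zeros of $g^{(l)}$, which has degree at most $K-l$, so $g^{(l)}\equiv 0$. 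Then $g$ has degree at most $l-1\le K-1$ yet at least $K$ zeros, so again $g\equiv 0$ and we reach the same contradiction $1=|\mu|<1$.

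I do not anticipate a genuine obstacle: once fact (iii) is in hand the argument is elementary, and (iii) is itself just Rolle's theorem applied to the $K$ simple real zeros of $T_K$. The only points requiring care are that the $K$ sign changes of $g$ be strict — which is precisely where the strict hypothesis $|\mu|<1$ enters — and that $z_0$ not be accidentally counted among the interior zeros produced by Rolle, which is the reason for reducing to $|z_0|>1$ before running the count.
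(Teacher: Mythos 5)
The paper does not prove this theorem---it is quoted verbatim from Rivlin's book, so the only meaningful comparison is with the classical proof given there, and your argument is exactly that proof: equioscillation of $T_K$ at the nodes $\cos(j\pi/K)$, the auxiliary polynomial $g=T_K-\mu p$ with $|\mu|<1$ forcing $K$ strict sign changes, and a Rolle-plus-zero-count contradiction after adjoining $z_0$. The reductions you flag (working on $|z_0|>1$ and recovering the endpoints by continuity, and the nonvanishing of $T_K^{(l)}$ outside $(-1,1)$ obtained by iterating Rolle on the $K$ simple zeros of $T_K$) are precisely the delicate points, and you handle them correctly.
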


Using above maximal growth condition, we can now prove the main result of this section. Before doing that, let us define the Chebyshev polynomial of degree $K$ over the interval $[\lambda_{\low},2]$ using a linear transformation as follows: 
$$
T_K^{[\lambda_{\low},2]}(z) \triangleq T_K\left(\frac{2z-2-\lambda_{\low}}{2-\lambda_{\low}}\right).
$$
Hence, $T_K^{[\lambda_{\low},2]}(\lambda_{\low}) = T_K(-1)$ and $T_K^{[\lambda_{\low},2]}(2) = T_K(1)$. Note that $T_K^{[\lambda_{\low},2]}(0) \neq 1$, and so, $T_K^{[\lambda_{\low},2]}$ is not a feasible point for ${\bf (OPT_t)}$ in this form. Hence, we need to normalize it as follows:
$$
p_K^*(z) \triangleq \frac{T_K^{[\lambda_{\low},2]}(z)}{T_K^{[\lambda_{\low},2]}(0)}.
$$
Now, we have $p_K^*(0) = 1$. Moreover, we also have 
$$
\|p_{K}^*\|_{\infty}^{[\lambda_{\low},2]} = \frac{\left\|T_K^{[\lambda_{\low},2]}\right\|_{\infty}^{[\lambda_{\low},2]}}{\left|T_K^{[\lambda_{\low},2]}(0)\right|} = \frac{1}{\left|T_K^{[\lambda_{\low},2]}(0)\right|}
$$

\begin{theorem}
We have $p_{t-1}^* \in \argmin {\bf (OPT_t)}$ for any $t \geq 2$. Hence, the optimal polynomial filter is given by properly normalized Chebyshev polynomial of degree $t-1$ at each time step $t \geq 2$. 
\end{theorem}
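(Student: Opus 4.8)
The plan is to reduce ${\bf (OPT_t)}$ to the classical maximal-growth property of Chebyshev polynomials (Theorem~\ref{growth}) by the same affine change of variables that was used to define $T_K^{[\lambda_{\low},2]}$. Let $\phi(z) \triangleq \frac{2z-2-\lambda_{\low}}{2-\lambda_{\low}}$, the affine bijection sending $[\lambda_{\low},2]$ onto $[-1,1]$, so that $T_K^{[\lambda_{\low},2]} = T_K \circ \phi$, and set $w_0 \triangleq \phi(0) = \frac{-(2+\lambda_{\low})}{2-\lambda_{\low}}$. The first step is the elementary observation that, since $\lambda_{\low} > 0$ and (implicitly) $\lambda_{\low} < 2$, we have $w_0 < -1$, hence $|w_0| \geq 1$; equivalently, the constraint point $0$ is separated from the interval $[\lambda_{\low},2]$, which is exactly what makes Theorem~\ref{growth} applicable.

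Next I would establish the lower bound. Take any feasible $p \in P_{t-1}[\lambda_{\low},2]$ with $p(0)=1$ and set $\tilde p \triangleq p \circ \phi^{-1}$. Because $\phi$ is affine, $\tilde p$ is again a polynomial of degree at most $t-1$, i.e. $\tilde p \in P_{t-1}[-1,1]$; because $\phi$ maps $[-1,1]$ bijectively onto $[\lambda_{\low},2]$, $\|\tilde p\|_{\infty}^{[-1,1]} = \|p\|_{\infty}^{[\lambda_{\low},2]}$; and $\tilde p(w_0) = p(0) = 1$. Applying Theorem~\ref{growth} with $K = t-1$, $l = 0$, and $z_0 = w_0$ (legitimate since $|w_0| \geq 1$) gives
$$
1 = |\tilde p(w_0)| \leq \|\tilde p\|_{\infty}^{[-1,1]} \, |T_{t-1}(w_0)| = \|p\|_{\infty}^{[\lambda_{\low},2]} \, \bigl|T_{t-1}^{[\lambda_{\low},2]}(0)\bigr|,
$$
using $T_{t-1}(w_0) = T_{t-1}(\phi(0)) = T_{t-1}^{[\lambda_{\low},2]}(0)$. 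Rearranging, every feasible $p$ satisfies $\|p\|_{\infty}^{[\lambda_{\low},2]} \geq 1/\bigl|T_{t-1}^{[\lambda_{\low},2]}(0)\bigr|$.

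Finally I would invoke the computation preceding the theorem statement: $p_{t-1}^*$ has degree $t-1$ and $p_{t-1}^*(0) = 1$, so it is feasible, and $\|p_{t-1}^*\|_{\infty}^{[\lambda_{\low},2]} = 1/\bigl|T_{t-1}^{[\lambda_{\low},2]}(0)\bigr|$. Thus $p_{t-1}^*$ attains the lower bound proved in the previous step, so $p_{t-1}^* \in \argmin {\bf (OPT_t)}$, which is the assertion. (It is worth remarking in passing that $|T_{t-1}^{[\lambda_{\low},2]}(0)| = |T_{t-1}(w_0)| > 1$ for $t \geq 2$ since $|T_K(z_0)| > 1$ whenever $|z_0| > 1$ and $K \geq 1$, so the normalization defining $p_{t-1}^*$ is well posed and the resulting bound is genuinely smaller than $1$.)

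I do not anticipate a real obstacle: the entire analytic content is carried by Theorem~\ref{growth}, and what remains is bookkeeping for the affine substitution — degree preservation, invariance of the sup-norm under $\phi$, and the image of the point $0$. The one step deserving explicit care is the verification that $w_0 = \phi(0)$ lies strictly outside $[-1,1]$; this is the hypothesis $|z_0| \geq 1$ of Theorem~\ref{growth}, and it is precisely the separation of the constraint point from the interval that singles out the (normalized) Chebyshev polynomial as the extremal solution.
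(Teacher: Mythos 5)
Your proposal is correct and follows essentially the same route as the paper: both reduce ${\bf (OPT_t)}$ to the maximal-growth property in Theorem~\ref{growth} applied at the point $0$ (equivalently $\phi(0)$ outside $[-1,1]$) and then observe that $p_{t-1}^*$ attains the resulting bound $1/\bigl|T_{t-1}^{[\lambda_{\low},2]}(0)\bigr|$. The only differences are presentational: you argue directly via a lower bound and spell out the affine change of variables and the check that $\phi(0)$ lies outside $[-1,1]$, whereas the paper argues by contradiction and simply asserts that Theorem~\ref{growth} transfers to the interval $[\lambda_{\low},2]$.
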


\begin{proof}
Fix any $t \geq 2$. Suppose that there exists $p \in P_{t-1}[\lambda_{\low},2]$ such that $p(0) =1$ and $\|p\|_{\infty}^{[\lambda_{\low},2]} < \|p_{t-1}^*\|_{\infty}^{[\lambda_{\low},2]}$. Our goal is to reach a contradiction. Note that Theorem~\ref{growth} is still true for Chebyshev polynomials over the interval $[\lambda_{\low},2]$. Therefore, as $0 \notin [\lambda_{\low},2]$, we have 
$
|p(0)| \leq \|p\|_{\infty}^{[\lambda_{\low},2]} \left|T_K^{[\lambda_{\low},2]}(0)\right|. 
$ 
But this implies that 
$
|p(0)| < \|p_{t-1}^*\|_{\infty}^{[\lambda_{\low},2]} \left|T_K^{[\lambda_{\low},2]}(0)\right| = 1
$
as 
$
\|p_{t-1}^*\|_{\infty}^{[\lambda_{\low},2]} = \frac{1}{\left|T_K^{[\lambda_{\low},2]}(0)\right|}. 
$
This is a contradiction as we assumed $p(0)=1$.
\end{proof}

Note that  Chebyshev polynomials over $[-1,1]$ satisfy the following recursive relation \cite[p. 55]{Riv81}:
$
T_{K+1}(z) = 2z \, T_{K}(z) - T_{K-1}(z) \,\, K\geq1. 
$
Hence we have 
$$
T_{K+1}^{[\lambda_{\low},2]}(z) = \frac{2z-2-\lambda_{\low}}{2-\lambda_{\low}} \, T_K^{[\lambda_{\low},2]}(z) - T_{K-1}^{[\lambda_{\low},2]}(z).
$$
For each $K\geq1$, let us define 
$$
\alpha_K \triangleq \frac{T_K^{[\lambda_{\low},2]}(0)}{T_{K+1}^{[\lambda_{\low},2]}(0)}, \,\, \beta_K \triangleq \frac{T_{K-1}^{[\lambda_{\low},2]}(0)}{T_{K+1}^{[\lambda_{\low},2]}(0)}.
$$
Then, we have the following recursive relation between optimal Chebyshev filters:
\begin{tcolorbox} 
[colback=white!100]
\begin{align*}
&\text{{\bf Recursive Implementation of Chebyshev Filter}} \\ &p_{K+1}^*(L) = \alpha_K \, \frac{2L-2-\lambda_{\low}}{2-\lambda_{\low}}  p_K^*(L) - \beta_K \, p_{K-1}^*(L)
\end{align*}
\end{tcolorbox} 
This formula enables us to calculate the output of the optimal filter recursively, significantly reducing the amount of algebraic computation required in each iteration, as it eliminates the need to compute \( p_{K+1}^*(L) f \) from scratch. By storing \( p_K^*(L)f \) and \( p_{K-1}^*(L)f \) in memory, we can easily calculate \( p_{K+1}^*(L)f \) using the recursive formula.

\section{Filter Design with Legendre Polynomials}\label{legendre}

\hspace{1pt} In this section, we reformulate the optimization problem ${\bf (OPT_t)}$ by using the $l_2$-norm instead of the sup-norm. It turns out that the optimal solution is weighted combination of properly scaled and normalized Legendre polynomials. 

For each time step $t \geq 2$, our goal is to solve the following optimization problem:
\begin{tcolorbox} 
[colback=white!100]
$$
{\bf (\widehat {OPT}_t)} \min_{p \in P_{t-1}[\lambda_{\low},2]} \|p\|_{2}^{[\lambda_{\low},2]} \,\, \text{subject to} \,\, p(0) = 1.
$$
\end{tcolorbox} 
Here, the $l_2$-norm is with respect to the Lebesgue measure over.
In general, for any Lebesgue integrable weight function $w:[\lambda_{\low},2] \rightarrow (0,\infty)$, we can define the weighted $l_2$-norm as 
$$
\|p\|_{2,w}^{[\lambda_{\low},2]} \triangleq \left( \int_{[\lambda_{\low},2]} |p(x)|^2 \, w(x) \, m(dx) \right)^{1/2}.
$$
Therefore, depending on $w$, the solution of the optimization problem $\mathbf{(\widehat{OPT}_t)}$ varies. However, because we lack knowledge of the eigenvalues of $P$ in the interval $[\lambda_{\low}, 2]$, we do not know how to assign weights to the points within this interval. Thus, it is more reasonable to assign equal weights, i.e., $w(x) = 1$, to all the points in $[\lambda_{\low}, 2]$, which leads to the classical $l_2$-norm.

To solve the above optimization problem, we need to define orthonormal  polynomials over $[\lambda_{\low}, 2]$ with respect to the classical $l_2$-norm. 
Note that if our interval is $[-1,1]$, then Legendre polynomials $\{L_n\}$ are the orthogonal polynomials \cite[p. 53]{Riv81}, where these polynomials are recursively defined as follows:
\begin{align*}
L_0(z) = 1, \,\, L_1(z) = z, \,\, 
(n+1) L_{n+1}(z) = (2n+1)zL_n(z)-nL_{n-1}(z) \,\, n\geq1. 
\end{align*}
Note that for each $n\geq 0$, we have \cite[Exercise 2.11, p. 62]{Riv81}
$$
\int_{[-1,1]} L_n(z)^2 m(dz) = \frac{2}{2n+1}.
$$
Hence, to obtain an orthonormal system, we need to normalize the Legendre polynomials as follows:
$$
\hat L_n(z) \triangleq \sqrt{n+\frac{1}{2}} \, L_n.
$$
These normalized Legendre polynomials satisfy the following recursive formula:
$$
\sqrt{\frac{2(n+1)^2}{2n+3}} \hat L_{n+1}(z) = \sqrt{2(2n+1)} z \hat L_n(z) - \sqrt{\frac{2n^2}{2n-1}} \hat L_{n-1}(z).
$$
Now, we can transform the orthonormality properties of these polynomials to the interval $[\lambda_{\low}, 2]$ by simply applying a linear transformation, which leads to the following orthonormal (with respect to the classical $l_2$-norm) polynomials over $[\lambda_{\low}, 2]$: 
$$
\tilde L_n(z) \triangleq \frac{2}{2-\lambda_{\low}} \, \hat L_n\left(\frac{2z-2-\lambda_{\low}}{2-\lambda_{\low}}\right) \,\, n \geq 0.
$$
Therefore, they satisfy the following recursive formula:
\begin{align*}
\sqrt{\frac{2(n+1)^2}{2n+3}} \tilde L_{n+1}(z) 
= \sqrt{2(2n+1)} \, \frac{2z-2-\lambda_{\low}}{2-\lambda_{\low}} \, \tilde L_n(z) - \sqrt{\frac{2n^2}{2n-1}} \tilde L_{n-1}(z).
\end{align*}
If we define 
$$
\alpha_n \triangleq \sqrt{\frac{2n^2}{2n-1}}, \,\, \beta_n \triangleq \sqrt{2(2n+1)} \,\, n\geq 1,
$$
then we can write the above recursive formula as
$$
\alpha_{n+1} \tilde L_{n+1}(z) = \beta_n \,  \frac{2z-2-\lambda_{\low}}{2-\lambda_{\low}} \, \tilde L_n(z) - \alpha_n \tilde L_{n-1}(z).
$$

Now, we present the main result of this section. To do so, for each $t\geq2$ and $k\geq0$, we define
$$
\xi_k^{t-1} \triangleq \frac{\tilde L_k(0)}{\sum_{k=0}^{t-1} \tilde L_k(0)^2}.
$$ 
Using these coefficients, for each $t\geq2$, let us also define the following polynomial:
$$
p_{t-1}^*(z) \triangleq \sum_{k=0}^{t-1} \xi_k^{t-1} \, \tilde L_k(z).
$$

\begin{theorem}
We have $p_{t-1}^* \in \argmin {\bf (\widehat {OPT}_t)}$ for any $t \geq 2$. Hence, the optimal polynomial filter is given by weighted combination of properly normalized and scaled Legendre polynomials of degree at most $t-1$, at each time step $t \geq 2$. 
\end{theorem}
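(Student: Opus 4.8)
The plan is to recognize $\mathbf{(\widehat{OPT}_t)}$ as a constrained least-squares problem in the Hilbert space $L^2([\lambda_{\low},2], m)$ and solve it by orthogonal projection. First I would observe that $\{\tilde L_k\}_{k=0}^{t-1}$ is, by construction, an orthonormal basis for the subspace $V_{t-1} \triangleq P_{t-1}[\lambda_{\low},2]$ of polynomials of degree at most $t-1$, with respect to the inner product $\langle p,q\rangle \triangleq \int_{[\lambda_{\low},2]} p(x) q(x)\, m(dx)$; this follows from the orthonormality of the normalized Legendre polynomials $\hat L_n$ over $[-1,1]$ together with the linear change of variables $z \mapsto \frac{2z-2-\lambda_{\low}}{2-\lambda_{\low}}$ (the Jacobian factor $\frac{2}{2-\lambda_{\low}}$ is exactly what makes $\tilde L_n$ orthonormal rather than merely orthogonal). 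Consequently any $p \in V_{t-1}$ can be written uniquely as $p = \sum_{k=0}^{t-1} c_k \tilde L_k$ with $\|p\|_2^{[\lambda_{\low},2]} = \big(\sum_{k=0}^{t-1} c_k^2\big)^{1/2}$, and the constraint $p(0)=1$ becomes the single linear equation $\sum_{k=0}^{t-1} c_k \tilde L_k(0) = 1$.

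The problem thus reduces to: minimize $\sum_k c_k^2$ subject to $\sum_k c_k \tilde L_k(0) = 1$. This is the classical problem of finding the point of minimum Euclidean norm on an affine hyperplane, whose solution is the multiple of the normal vector lying on the hyperplane. I would carry this out either via Lagrange multipliers (set $\nabla(\sum c_k^2 - 2\mu(\sum c_k \tilde L_k(0) - 1)) = 0$, giving $c_k = \mu \tilde L_k(0)$, then solve for $\mu$ from the constraint) or directly via Cauchy–Schwarz: for any feasible $(c_k)$,
\[
1 = \left(\sum_{k=0}^{t-1} c_k \tilde L_k(0)\right)^2 \leq \left(\sum_{k=0}^{t-1} c_k^2\right)\left(\sum_{k=0}^{t-1} \tilde L_k(0)^2\right),
\]
so $\|p\|_2^2 \geq 1/\sum_{k=0}^{t-1} \tilde L_k(0)^2$, with equality iff $c_k$ is proportional to $\tilde L_k(0)$. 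Imposing the constraint pins down the proportionality constant as $1/\sum_{k=0}^{t-1}\tilde L_k(0)^2$, yielding exactly $c_k = \xi_k^{t-1} = \tilde L_k(0)/\sum_{k=0}^{t-1}\tilde L_k(0)^2$ and hence $p = p_{t-1}^*$. One should note in passing that $\sum_{k=0}^{t-1}\tilde L_k(0)^2 \neq 0$ (indeed $\tilde L_0$ is a nonzero constant), so the minimizer is well-defined, and the minimum value is $\|p_{t-1}^*\|_2^{[\lambda_{\low},2]} = 1/\sqrt{\sum_{k=0}^{t-1}\tilde L_k(0)^2}$.

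The only genuinely delicate point — and the one I expect to be the main obstacle to a fully rigorous write-up rather than the heart of the argument — is the passage between the two inner products under the linear change of variables: one must verify that $\int_{[\lambda_{\low},2]} \tilde L_m(z)\tilde L_n(z)\, m(dz) = \delta_{mn}$ genuinely holds, i.e. that the scaling constant $\frac{2}{2-\lambda_{\low}}$ in the definition of $\tilde L_n$ is the correct one. Writing $z = \frac{2-\lambda_{\low}}{2} u + \frac{2+\lambda_{\low}}{2}$ so that $u$ ranges over $[-1,1]$ and $dz = \frac{2-\lambda_{\low}}{2}\, du$, the integral becomes $\big(\frac{2}{2-\lambda_{\low}}\big)^2 \cdot \frac{2-\lambda_{\low}}{2} \int_{[-1,1]} \hat L_m(u)\hat L_n(u)\, du = \frac{2}{2-\lambda_{\low}} \cdot \frac{2-\lambda_{\low}}{2} \delta_{mn} = \delta_{mn}$, using $\int_{[-1,1]} \hat L_m \hat L_n\, du = \delta_{mn}$ from the normalization $\hat L_n = \sqrt{n+\tfrac12}\, L_n$ and $\int_{[-1,1]} L_n^2 = \frac{2}{2n+1}$. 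Once this bookkeeping is confirmed, everything else is the standard finite-dimensional Hilbert-space minimization above, and the theorem follows immediately.
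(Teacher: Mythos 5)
Your argument is essentially the paper's own proof: expand $p$ in the basis $\{\tilde L_k\}_{k=0}^{t-1}$, reduce $\mathbf{(\widehat{OPT}_t)}$ to minimizing $\sum_k c_k^2$ under the single linear constraint $\sum_k c_k\tilde L_k(0)=1$, and conclude by Cauchy--Schwarz together with its equality case. One concrete slip in your normalization check: $\left(\tfrac{2}{2-\lambda_{\low}}\right)^2\cdot\tfrac{2-\lambda_{\low}}{2}=\tfrac{2}{2-\lambda_{\low}}$, not $1$, so with the scaling constant $\tfrac{2}{2-\lambda_{\low}}$ used in the definition of $\tilde L_n$ these polynomials are orthogonal with common squared norm $\tfrac{2}{2-\lambda_{\low}}$ rather than orthonormal (the constant giving orthonormality is $\sqrt{2/(2-\lambda_{\low})}$); this is harmless for the theorem, since the common factor multiplies $\sum_k c_k^2$ uniformly and drops out of the minimization, but it rescales the minimum value you report by that factor.
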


\begin{proof}
The proof of this problem is quite classical and can be found in any book on orthogonal polynomials. We refer readers to \cite[p. 78]{Tot05}. For completeness, we provide the full proof here, which is quite elementary.

Note that any polynomial $p \in P_{t-1}[\lambda_{\low},2]$ can be written 
$
p = \sum_{k=0}^{t-1} c_k \, \tilde L_k.
$
If $p$ is a feasible point of ${\bf (\widehat {OPT}_t)}$, then 
$
p(0) = \sum_{k=0}^{t-1} c_k \, \tilde L_k(0) = 1.
$
Hence, by Cauchy's inequality, we have 
$$
1 \leq \left(\sum_{k=0}^{t-1}c_k^2\right) \left(\sum_{k=0}^{t-1}\tilde L_k(0)^2\right).
$$
Therefore, for any feasible $p$, we have 
$$
\left(\|p\|_{2}^{[\lambda_{\low},2]}\right)^2 = \sum_{k=0}^{t-1} c_k^2  \geq \frac{1}{\left(\sum_{k=0}^{t-1}\tilde L_k(0)^2\right)}.
$$
Note that we have equality above if and only if 
$$
c_k = \frac{\tilde L_k(0)}{\sum_{k=0}^{t-1} \tilde L_k(0)^2} \triangleq \xi_k^{t-1}.
$$
This completes the proof.
\end{proof}

Note that we can implement this optimal filter recursively at two different time scales. Specifically, for each $K\geq0$, we define
$$
\gamma_K \triangleq \frac{\sum_{k=0}^{K} \tilde L_k(0)^2}{\sum_{k=0}^{K+1} \tilde L_k(0)^2}.
$$
Then the optimal filter can be computed via the following coupled iterations:

\begin{tcolorbox} 
[colback=white!100]
\begin{align*}
&\hspace{-50pt}\text{{\bf Recursive Implementation of \bf Legendre Filter}} \\
p_{K+1}^*(L) &= \gamma_K \, p_K^*(L) + \xi_{K+1}^{K+1} \tilde L_{K+1}(L) \\
\tilde L_{K+1}(L) &=  \frac{\beta_K}{\alpha_{K+1}}  \frac{2L-2-\lambda_{\low}}{2-\lambda_{\low}} \tilde L_K(L) - \frac{\alpha_K}{\alpha_{K+1}} \tilde L_{K-1}(L)
\end{align*}
\end{tcolorbox} 

This formula allows us to compute the output of the optimal filter recursively, reducing the algebraic computation needed in each iteration, as it avoids recomputing \( p_{K+1}^*(L) f \) from the beginning. By storing \( p_K^*(L)f \), \( \tilde L_{K}(L)f \), and \( \tilde L_{K-1}(L)f \) in memory, we can efficiently compute \( p_{K+1}^*(L)f \) using the coupled recursive formulas.

\section{Numerical Experiments}\label{numerical}

In this section, we present numerical experiments to illustrate the performance of the optimal filters derived in the previous sections. Our primary aim is to compare these filters with the standard iterative procedure from the classical ergodic theorem. To this end, we consider two simple yet representative examples: the random walk on graphs and the Glauber chain. These examples are chosen precisely because they capture the essential features of our theoretical framework while remaining transparent and easily interpretable. As our goal is to demonstrate the effectiveness of the method in a clear and accessible manner, we deliberately keep the setup simple. Extending these techniques to larger-scale and real-world applications is a natural next step, and we leave such implementations to practitioners who may build on the foundations laid in this work.

\subsection{Random Walk on a Graph}

Let $\G = (\sX,E)$ be an undirected, simple, and connected graph with the vertex set $\sX$ and the edge set $E$. A random walk on $\G$ is described as follows: if $X_t = x$ is the current position, then the next position is chosen uniformly from the set of neighbors of $x$. Hence, if $d(x)$ is the degree of $x$, then 
$$
P(x,y) = \begin{cases} \frac{1}{d(x)} & \text{if} \,\, \{x,y\} \in E \\ 0 & \,\, \text{otherwise} \end{cases}
$$ 
Since the graph is connected, the corresponding Markov chain is irreducible. It is also straightforward to see that the Markov chain is reversible with respect to 
$
\pi(x) = \frac{d(x)}{2|E|}, \,\, x \in \sX,
$
where $|E|$ is the number of edges. 

In the numerical experiments, we consider a very simple random walk on a cycle \cite[Example 2.1]{DiSt91}. Let $p$ be an odd number. We place the integers $\rZ_p \triangleq \{0,1,2,\ldots,p-1\}$ on a circle in the given order and connect each consecutive point in the circle. This forms our graph. In Figure~\ref{fig:circle_integers}, an example is given where $p=11$.  

\begin{figure}[H]
    \centering
\begin{tikzpicture}
\def\p{11} 
\def\radius{1.2cm} 

\draw (0,0) circle (\radius);

\foreach \i in {0,...,\numexpr\p-1} {
    \node[draw, circle, inner sep=1pt, fill=white] (N\i) at ({360/\p * \i}:\radius) {\i};
}
\end{tikzpicture}
\caption{A circle with \( p \) integers placed evenly around it, where \( p \) is an odd number.}
\label{fig:circle_integers}
\end{figure}

We consider a random walk on this graph. Therefore, we have 
$$
P(x,y) = \begin{cases} \frac{1}{2} & \text{if} \,\, |x-y| \in \{1,p-1\} \mod p \\ 0 & \,\, \text{otherwise} \end{cases}
$$
and $\pi(x) = 1/p$. For this Markov chain, by \cite[Corollary 1]{DiSt91} (see also \cite[p. 44]{DiSt91}), one can obtain the following lower bound to the second smallest eigenvalue $\lambda_{x_{**}}$ of $L$ as follows:
$$
\lambda_{x_{**}} \geq \lambda_{\low} \triangleq \frac{8p}{(p-1)^2(p+1)}.
$$
Indeed, for this random walk, it is possible to compute the exact eigenvalues of \( L \) for small $p$ values. However, since this is a toy example in which we want to demonstrate the effectiveness of our method, and computing the eigenvalues of \( L \) for other, more complicated models is not feasible, we use the above estimate instead of the exact value.

In the numerical experiment, we take $p=11$ and generate a random function $f$ as follows:
\small
\[
f = \begin{bmatrix}
8.53, 6.22, 3.50, 5.13, 4.01, 0.75, 2.39, 1.23, 1.83, 2.39, 4.17 
\end{bmatrix}
\]
\normalsize
In this case, $\lambda_{\low} = 0.0733$. We apply the ergodic average, Bernstein polynomial filter, Chebyshev polynomial filter, and Legendre polynomial filter up to degree $20$, and obtain the results depicted in Figure~\ref{random_walk_on_graphs}.

\begin{figure}[h]
    \centering
    \includegraphics[width=7cm]{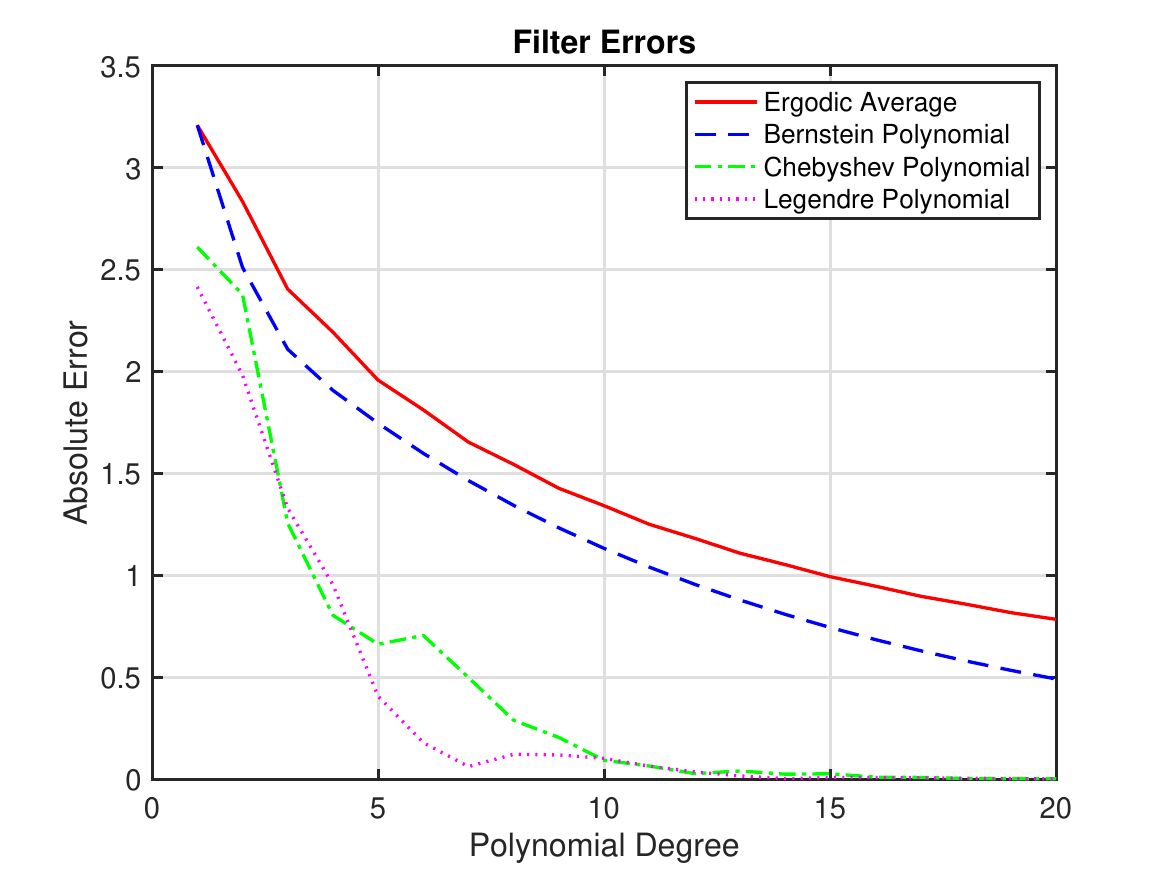}
    \caption{A maximum absolute error as a function of degree of the filter.}
    \label{random_walk_on_graphs}
\end{figure}

In this figure, the maximum absolute error is computed as 
$$
\max_{x \in \sX} \bigg| p(L)f(x) - \sum_{x \in \sX} f(x) \, \pi(x) \bigg|
$$
for any polynomial filter $p(L)$. This gives the worst deviation of the filter output from the desired value with respect to the initial position. As one can see, the Bernstein polynomial filter is slightly better than the ergodic average. However, the Chebyshev and Legendre polynomial filters outperform the ergodic average significantly, as expected. Indeed, for these polynomials, the maximum absolute error decreases to zero very quickly. This demonstrates the usefulness of the optimality of these filters with respect to the sup-norm and \( l_2 \)-norm compared to other polynomial filters.

\subsection{Glauber Chain}

Let $V$ and $S$ be finite sets. We can think of $V$ as the set of vertices of a graph and $S$ is the set of possible values that can be taken by these vertices. Our state space $\sX$ is a subset of $V^S$. Let $\pi$ be a probability distribution on $\sX$. In statistical physics, $\pi$ is called a Gibbs distribution. 

The \emph{Glauber dynamics} for $\pi$ is a reversible Markov chain with state space $\sX$ and stationary distribution $\pi$. The transition probability is given as follows: if $X_t = x$ is the current state, then we generate a point (vertex) $v$ from $V$ uniformly and pick the next state randomly via
$$
X_{t+1} = y \sim \pi(y|\sX(x,v)) = \begin{cases} \frac{\pi(y)}{\pi(\sX(x,v))}& \text{if} \,\, y \in \sX(x,v) \\ 0& \text{otherwise}. \end{cases}
$$
Here, $\sX(x,v) \triangleq \{y \in \sX: y(w) = x(w) \,\, \forall \, w \neq v\}$. 
Verbally, a new state is chosen according to the probability distribution $\pi$ conditioned on the set of states equal to the current state $x$ at all vertices different from $v$. Define $\sX(x) \triangleq \bigcup_{v \in V} \sX(x,v)$. Then, the transition probability can be defined as follows:
\begin{align*}
P(x,y) = \begin{cases}\frac{1}{|V|} \pi(y|\sX(x,v))& \text{if} \,\, y \in \sX(x) \, \text{and} \, y \neq x \\ \sum_{v \in V} \frac{1}{|V|}\pi(y|\sX(x,v))& \text{if} \,\, y=x \\ 0& \text{otherwise}.\end{cases}
\end{align*} 
Note that if $\pi(x)>0$ for all $x \in \sX$, then $P$ is irreducible. Moreover, one can also prove that $P$ and $\pi$ satisfy the detailed balance condition. Hence, $\pi$ is a stationary distribution of $P$ and $P$ is reversible. 

 In the numerical experiment, we consider a Glauber chain on a cycle \cite{Ser03}. Let $p$ be the number of vertices in the cycle. Each vertex in this graph can take two values: $+1$ or $-1$. Hence, the state space of this Markov chain is $\sX \triangleq \{+1,-1\}^p$. In Figure~\ref{fig:glauber_circle}, an instance of a state transition of this Markov chain is given where $p=11$. In these two cycles, only the direction of the top nodes are different, and so, with positive probability, a transition can occur.

\begin{figure}[H]
    \centering
\begin{subfigure}
\centering
\begin{tikzpicture}
    \def \n {11} 
    \def \radius {1.2cm} 
\draw (0,0) circle (\radius);
    \foreach \s in {1,...,\n}
    {
        \pgfmathparse{\s>3&&\s<8} 
        \let\spin\pgfmathresult
        \node[draw, circle, inner sep=0.2pt, fill=white] (A\s) at ({360/\n*(\s - 1)}:\radius) {\ifnum\spin=0 $\downarrow$ \else $\uparrow$ \fi};
    }
    
\draw[->] (1.5,0) to[out=45, in=135] (3,0);
\end{tikzpicture}
\end{subfigure} 
\begin{subfigure}
\centering
\begin{tikzpicture}
    \def \n {11} 
    \def \radius {1.2cm} 
\draw (0,0) circle (\radius);
    \foreach \s in {1,...,\n}
    {
        \pgfmathparse{\s>4&&\s<8} 
        \let\spin\pgfmathresult
        \node[draw, circle, inner sep=0.2pt, fill=white] (A\s) at ({360/\n*(\s - 1)}:\radius) {\ifnum\spin=0 $\downarrow$ \else $\uparrow$ \fi};
    }
\end{tikzpicture}
\end{subfigure} 
\caption{A state transition of a Glauber chain on the cycle. Up arrow means $+1$, down arrow means $-1$. }
\label{fig:glauber_circle}
\end{figure}

In this model, the Gibbs distribution is given by
$$
\pi(x) = \frac{e^{-\beta H(x)}}{Z(\beta)},
$$
where $H(x) = - \sum_{u,w \in \rZ_p : u \sim w} J_{v,w} \, x(v) x(w)$ is the energy of the state $x$ and $Z(\beta)$ is the partition function. Therefore, we have 
$$
P(x,y) = \frac{1}{p} \sum_{w \in \rZ_p} \frac{e^{\beta y(w) S(x,w)} \, 1_{\{x(v) = y(v) \,\, \forall v \neq w\}}}{e^{\beta y(w) S(x,w)}+e^{-\beta y(w) S(x,w)}} ,
$$
where $S(x,w) \triangleq \sum_{u: u \sim w} J_{u,w} \, x(u)$. For this Markov chain, we can obtain its second largest eigenvalue exactly via computing eigenvalues of some low-dimensional matrix \cite[Corollary 6]{Ser03}. Indeed, let $M$ be $p \times p$ matrix whose entries are all zero except immediately above and below the diagonal and non-zero entries have the following description: 
\begin{align*}
M(i,i-1) = s_{i-1}/(c_{i-1}+c_i), \,\,
M(i,i+1) = s_{i}/(c_{i-1}+c_i),
\end{align*} 
where $J_i \triangleq J_{i,i+1}$, $s_i \triangleq \sinh(2J_i)$, and $c_i \triangleq \cosh(2J_i)$. Let the eigenvalues of $M$ be $\gamma_1 \geq \gamma_2 \geq \ldots \geq \gamma_p$. Then, by \cite[Corollary 6]{Ser03}, the second largest eigenvalue of $P$ is given by 
$
1-\frac{1-\gamma_1}{p}.
$
Hence, we can take $\lambda_{\low}$ as 
$
\lambda_{x_{**}} = \frac{1-\gamma_1}{p} \triangleq \lambda_{\low}.
$

In the numerical experiment, we take $p=4$, $J_{i} = 1$ for all $i$, and $\beta = 0.2$. We generate a random function $f$ as follows:
\small
\begin{align*}
f = [
9.04,  9.79, 4.38, 1.11, 2.58, 4.08, 5.94, 2.62, 6.02, 7.11, 
2.21, 1.17, 2.96, 3.18, 4.24, 5.07]
\end{align*}
\normalsize
In this case, $\lambda_{\low} = 0.155$. We apply the ergodic average, Bernstein polynomial filter, Chebyshev polynomial filter, and Legendre polynomial filter up to degree $20$, and obtain the results depicted in Figure~\ref{glauber_chain}.

\begin{figure}[h]
    \centering
    \includegraphics[width=7cm]{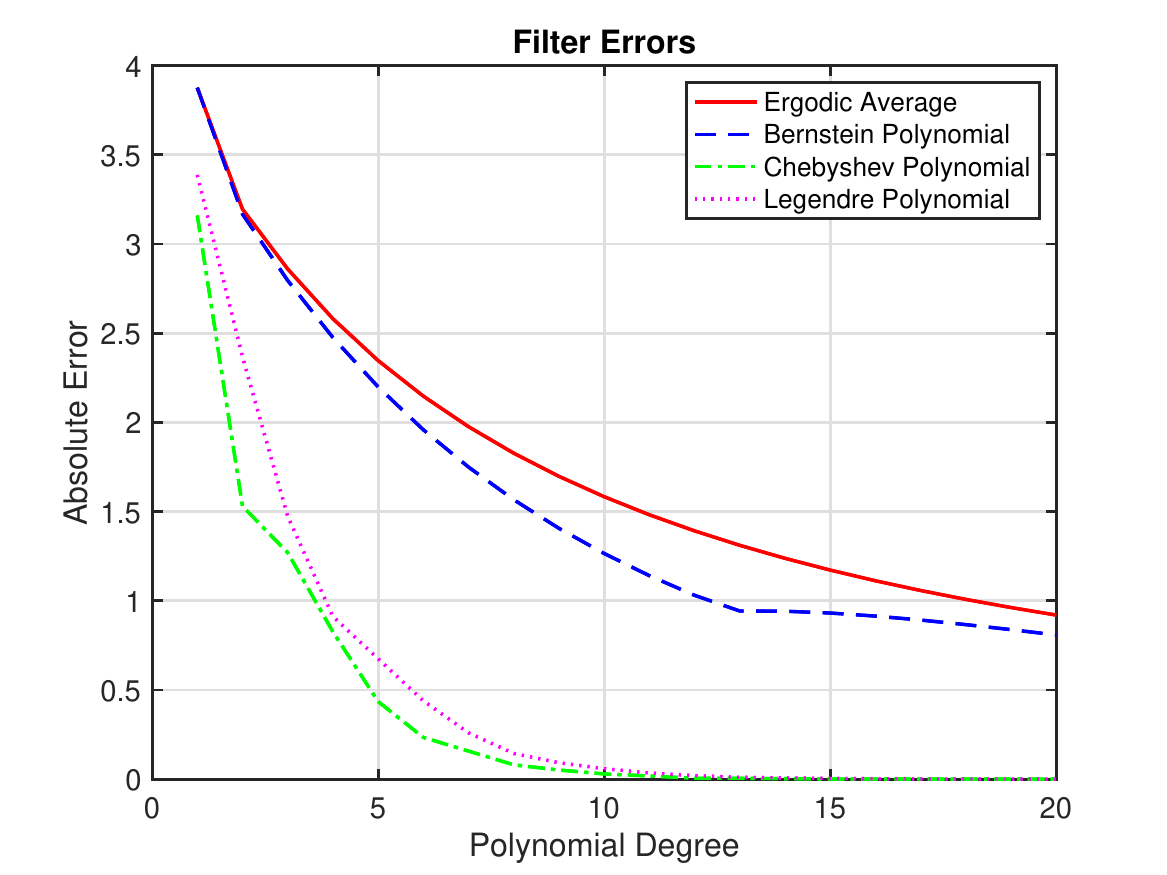}
    \caption{A maximum absolute error as a function of degree of the filter.}
    \label{glauber_chain}
\end{figure}

In this figure, the maximum absolute error is computed as in the previous example. It is evident that the Bernstein polynomial filter exhibits a slight improvement over the ergodic average. However, both the Chebyshev and Legendre polynomial filters markedly surpass the ergodic average, as anticipated. In fact, for these polynomials, the maximum absolute error diminishes rapidly towards zero similar to the previous example.

\section{Conclusion} \label{conclusion}

In this paper, we explore the optimization of ergodic or Birkhoff averages in the context of the ergodic theorem for reversible Markov chains via utilizing graph signal processing techniques to achieve faster convergence to desired values. We use the transition probabilities of a Markov chain to form a directed graph so that we can represent a function on the state space as a graph signal. This allowed us to introduce a graph variation concept and define the graph Fourier transform for graph signals on this directed graph. We identify the inherent non-optimality of the standard graph filter used in ergodic theorem iterations and address this by formulating three optimization problems with distinct objectives. The solutions to these problems yield the Bernstein, Chebyshev, and Legendre polynomial filters, each offering improved performance over traditional method. Our numerical experiments demonstrate that the Bernstein filter provides a modest improvement over the ergodic average. More significantly, the Chebyshev and Legendre filters show substantial enhancements, achieving rapid convergence to the desired value. 

\subsection*{Future Research Directions}

We have identified several promising research directions to extend the work presented in this paper. A primary area of interest involves the analysis of non-reversible Markov chains. In this setting, the graph Laplacian is no longer symmetric, causing its spectrum to move into the complex plane. This shift presents a significant mathematical challenge: the design of optimal graph filters would necessitate the use of complex polynomials. Consequently, the desirable properties and optimality guarantees associated with Bernstein, Chebyshev, and Legendre polynomials, which are central to our current framework, may no longer hold. Pursuing this direction requires a deeper exploration into the approximation theory of complex polynomials.

A further research direction involves extending this problem to abstract state spaces. In this setting, the classical graph structure no longer exists; however, analogous to deterministic dynamical systems, spectral filtering methods may still be applied to accelerate ergodic averages. To adapt the optimal filtering techniques from our graph signal processing framework to this continuous or infinite context, a key challenge arises: the spectral domain, which may contain countably many elements (or even a continuous spectrum), must be approximated by a finite set. This approximation must satisfy a specific spectral gap property to ensure that the filters remain effective. Bridging this gap between discrete graph signals and abstract operator theory would allow for a more generalized application of optimal filtering.

A final research direction involves the investigation of Infinite Impulse Response (IIR) filters, which correspond to rational filtering methods. In this framework, the acceleration is achieved using rational functions rather than polynomials.
Implementing this approach requires a deep dive into the approximation properties of rational functions, analogous to the well-studied roles of Bernstein, Chebyshev, and Legendre polynomials in the FIR (Finite Impulse Response) setting. Successfully extending the theory to rational filters would allow for a rigorous performance comparison, enabling us to identify the specific trade-offs, advantages, and limitations of rational versus polynomial filtering for ergodic acceleration.
\section{Acknowledgements}

The author thanks Professor Alexander Goncharov for guiding him to important findings in approximation theory.

\bibliographystyle{siamplain}
\bibliography{references}

\end{document}